\newif\ifmai\maifalse
\newif\iffull\fulltrue
\newcommand{\defi}{\stackrel{\triangle}{=}}
\newcommand{\qif}{{\sc  qif}}
\newcommand{\es}{\emptyset}
\newtheorem{definition}{Definition}[section]
\newtheorem{proposition}[definition]{Proposition}
\newtheorem{lemma}[definition]{Lemma}
\newtheorem{corollary}[definition]{Corollary}
\newtheorem{theorem}[definition]{Theorem}
\newcommand{\qed}{\hfill\mbox{\ \vrule height4pt width3pt depth2pt}}
\newenvironment{proof}%
{\begin{trivlist}%
\item[]{\bf Proof:}}%
{\end{trivlist}}
 \newcommand{\F}{\mathcal{F}}
 \newcommand{\Code}{\mathcal{C}}
\newcommand{\var}{\mathrm{var}}
\title{Approximate Model Counting, Sparse XOR Constraints  and Minimum Distance}         
\author{Michele Boreale$^1$ \quad Daniele Gorla$^2$
\\$^1$Univ. Firenze, Dip. Statistica, Informatica, Applicazioni (DiSIA) 
\\
$^2$Universit\`a di Roma ``La Sapienza",  Dep. of Computer Science            
}
\author{Michele Boreale$^1$ \and Daniele Gorla$^2$}
\institute{
$^1$Universit\`a  di Firenze, Dip. Statistica, Informatica, Applicazioni (DiSIA) 
$^2$Universit\`a di Roma ``La Sapienza",  Dep. of Computer Science            
}
\begin{document}

\maketitle

\begin{abstract}
The problem of counting the number of models of a given Boolean
formula has numerous applications, including computing the leakage
of deterministic programs in Quantitative Information Flow. Model
counting is a hard,   \#P-complete problem. For this reason, many
approximate counters have been developed in the last decade,
offering formal guarantees of confidence and accuracy. A popular
approach is based on the idea of using random XOR constraints  to,
roughly, successively halving the solution set until no model is
left: this is checked by invocations to a SAT solver. The
effectiveness of this procedure hinges on the ability of the SAT
solver to deal with XOR constraints, which in turn crucially depends
on the length of such constraints. We study   to what extent one can
employ sparse, hence short, constraints, keeping   guarantees of
correctness. We show that the resulting bounds are closely related
to the geometry of the set of models,   in particular to the minimum
Hamming distance between models. We evaluate our theoretical results
on a few concrete formulae. Based on our findings, we finally
discuss
  possible directions for
  improvements of the current state of the art in approximate model counting.
\end{abstract}

\iffull\else
\keywords{Model counting, Approximate counting, XOR sampling}  
\fi

\section{Introduction}\label{sec:intro}
\#SAT (aka {\em model-counting}) is the problem of counting the
number of satisfying assignments of a given Boolean formula and is a
\#P-complete problem. Indeed, every NP Turing machine can be encoded
as a formula
  whose satisfying assignments correspond to the accepting paths of  the machine  \cite{Val79}.
Thus, model-counting is harder than satisfiability: \#SAT is indeed
intractable in cases for which SAT is tractable (e.g., sets of Horn
clauses or sets of 2-literal clauses) \cite{VV86}. Still, there are
cases in which model-counting is tractable (e.g., OBDDs and
d-DNNFs). For a very good overview of the problem and of some
approaches to it see \cite{GSS09}.

Our interest in model counting originates from its applications in
the field of Quantitative Information Flow (\qif)
\cite{Catuscia,Smith}. Indeed, a   basic result in \qif\ is that the
maximum min-entropy leakage of a deterministic program is $\log_2
k$, with $k$ the number of distinct outputs the program can return
\cite{Smith}, varying the input. If the program is modeled as a
Boolean formula, then computing its leakage reduces
 to \#SAT, specifically to computing the number of models of the formula  obtained by
 existentially projecting out the non-output variables; see \cite{Legay18,KWW16}. 

Over the years, several \emph{exact} counting algorithms have been
put forward and implemented, such as, among others,
\cite{GKNS07,KWW16,MMBH12,Thu06}, with applications to \qif\
\cite{KMM13}. The problem with exact counters is that, although
performing reasonably  well when certain parameters of the formula
-- size, number of variables, number of clauses -- are relatively
small, they rapidly go out of memory as these parameters grow.

For this reason, 
\emph{approximate}  counters have more and more been considered.
Indeed, in many applications,  the exact count of models is not
 required: it may suffice to provide an estimate, as long as
the method is quick and it is   equipped with a formal guarantee of
correctness. This is typically the case in \qif, where knowing the
exact count within a factor of $\eta$ is sufficient to estimate
leakage within $\log_2\eta$ bits. For probabilistic counters,
correctness is usually expressed in terms of two parameters:  {\em
accuracy} -- the desired maximum difference between the reported and
the true count; and {\em confidence} -- the probability that the
reported result is actually within the specified accuracy.
%
%
\ifmai
For example, there are counters without theoretical guarantees but quite good in practice \cite{EGS12,Rub13,WS05},
counters where the user can only specify the confidence \cite{Gomes,LSS08},
and also counters where the user can specify both the confidence and the tolerance
\cite{approxmc,KLM89,KWW16}.
Our work contributes to the latter research line.
\fi

We set ourselves in the  line of research pioneered by \cite{VV86}
and followed,
 e.g., by \cite{AT17,Gomes,approxmc,KWW16}.
 The basic idea of a probabilistic model counting algorithm
  is the following (Section \ref{sec:two}):
 given a formula $\phi$
in the Boolean variables $y_1,\ldots,y_m$, one  chooses at random
$\langle a_0,\ldots, a_m\rangle \in \{0,1\}^{m+1}$. The resulting
  {\em XOR constraint}
$a_0  = a_1y_1\oplus \cdots \oplus a_m y_m$
splits evenly the set of models of $\phi$ into two parts: those
satisfying the constraint and those not satisfying it. If one
independently generates $s$ such constraints $c_1,\ldots,c_s$, the
formula $\phi'
\defi \phi\wedge c_1\wedge\cdots \wedge c_s$ has an expected $\frac N
{2^s}$ models, where $N$ is the number of models of $\phi$ (i.e.,
the number we aim at estimating). If $\phi'$ is still satisfiable,
then with high probability $N\geq 2^s$, otherwise $N<2^s$. By
repeating this process, one can arrive at a good estimate of $N$.
This procedure can be implemented by  relying on  any SAT-solver
capable of dealing with XOR constraints,  e.g  CryptoMiniSat
\cite{CMS}; or even converting the XOR constraints into CNF before
feeding $\phi'$ to the SAT solver. In any case,   the branching
factor associated with searching for models of $\phi'$ quickly
explodes as the \emph{length} (number of variables)  of  the XOR
constraints grows. The random
  generation outlined above will lead to an expected length
of $\frac m 2$ for each constraint, making  the procedure not easily
scalable as $m$ grows.

In the present paper,  we study   under what conditions one can
employ   sparse, hence shorter, constraints, keeping the same
guarantees of correctness. We generalize the results of
\cite{EGSS14} to  arrive at an improved understanding of how
sparsity is related to minimum distance between models, and how this
affects  the   counting procedure. Based on these results, we also
suggest a possible direction for a new counting methodology based on
the use of Low Density Parity Check (LDPC) codes \cite{Gall62,LDPC};
however, we leave a through experimentation with this new
methodology for future work.

The main point is to generate the coefficients $a_1,\ldots,a_m$
according to a probability value $\lambda \in \left(0,\frac 1 2\right]$, rather
than uniformly. This way, the constraints will have an average
length of $\lambda m\ \leq \frac m 2$   each.
\ifmai To have theoretical guarantees, we shall actually let the
user specify three parameters $\alpha > 1$, $\beta > 0$ and $\delta
\in [0,\frac 1 2)$ with the aim that our algorithm with input $\phi$
returns an integer $s$ such that $N \in [\lfloor 2^{s-\alpha}
\rfloor, \lceil 2^{s+\beta} \rceil]$ with probability $1-\delta$. We
need the two slack parameters $\alpha$ and $\beta$ because, as noted
in \cite{AT17}, providing upper and lower bounds to the number of
models of a formula are inherently different problems, with the
former far more difficult than the latter. Moreover, as in
\cite{AT17}, upper bounding the number of models essentially reduces
to upper bounding the variance of the random variable counting the
number of models of $\phi'$, the input formula extended with the
random XOR constraints.
\fi
Basically,  the correctness guarantees of the algorithm depend, via
 the Chebyshev inequality, on  keeping  the  variance
   of the number of models of $\phi'$, the formula
obtained by joining the constraints, below a certain threshold. A
value of the density $\lambda$ that achieves this is said to be
\emph{feasible} for the formula. In our main result (Section
\ref{sec:three}), we provide a bound    on the variance that   also
depends on the minimum Hamming distance $d$ between the formula's
models: a larger $d$ yields a smaller variance, hence smaller
feasible $\lambda$'s. Our bound essentially coincides with that of
\cite{EGSS14} for $d=1$.
Therefore, in principle, a lower bound on the minimum distance can
be used to obtain tighter bounds on $\lambda$, making the XOR
constraints shorter and the counting procedure  pragmatically more
efficient.

We will show this phenomenon at work (Section
\ref{sec:four}) on some formulae where the value of $d$ is known by
construction, comparing our results with the state of the art model
counter ApproxMC3 \cite{SM19}. These considerations also
suggest that, if no information on $d$ is available, one can encode
the formula's models using an error correcting code with a known
minimum distance. We will briefly discuss the use of LDPC codes to
this purpose, although  at the moment we have no experimental
results available in this respect. A comparison with recent related
work concludes the paper (Section \ref{sec:five}). 
\iffull\else
For reasons of
space, all proofs have been sketched and are fully available
in \cite{}.
\fi
%

\ifmai The paper is organized as follows. In Section \ref{sec:two},
we present the general framework for approximate model counting we
shall adopt in this paper. In Section \ref{sec:three}, we develop
our theory, by showing how we can make the XOR constraints shorter
without compromising the theoretical bounds on the obtained results,
by properly bounding the variance of the random variable counting
the models of the formula together with the XOR constraints. In
Section 4, we give some concrete evidence on the quality of our
results, by numerically quantifying the average length reduction and
by practically running a tool derived from our results. Section 5
concludes the paper, by drawing possible directions for future
research. \fi


\section{A general counting algorithm}\label{sec:two}
In what follows, we let $\phi(y)$, or just $\phi$,   denote a
generic boolean formula 
with boolean variables $y=(y_1,...,y_m)$ and $m\geq 1$.

\subsection{A general scheme}
According to a well-known general scheme \cite{Gomes},  the building
block of a statistical counting procedure is a probabilistic
decision algorithm: with high probability, this algorithm correctly
decides whether the cardinality of the set is, or is not, below a
given threshold $2^s$, within some tolerance factors, given by the
slack parameters $\alpha$ and $\beta$ below.

\begin{definition}[\#SAT decision algorithm]\label{def:decalg}  Let $0\leq \delta < frac 1 2$ (error probability), $\alpha > 1$ and $\beta > 1$ (two slack parameters) be three reals. An \emph{$(\alpha,\beta,\delta)$-decision algorithm} (for \#SAT) is a  probabilistic algorithm $A(\cdot,\cdot)$, taking a pair of an integer $s$ and a boolean formula $\phi$, and returning either $1$ (meaning `$\#\phi \geq 2^{s-\alpha}$')   or $0$ (meaning `$\#\phi \leq 2^{s+\beta}$') and
such that for each integer $s\geq 0$ and formula $\phi$:
\begin{enumerate}
\item $\#\phi > 2^{s+\beta}$ implies $\Pr\left(A(s,\phi)=0\right)\leq \delta$;
\item $\#\phi < 2^{s-\alpha}$ implies $\Pr\left(A(s,\phi)=1\right)\leq \delta$.
\end{enumerate}
\end{definition}

The use of two different slack parameters in the definition above is justified by the need of stating
formal guarantees about the outcome of the algorithm, while keeping the precision of the algorithm
as high as possible.

As usual, we can boost the confidence in the reported answer, and get an arbitrarily small  error probability, by running   $A(s,\phi)$  several times independently. In particular, consider the algorithm $RA_t(s,\phi)$ obtained by running $A(s,\phi)$ an odd $t\geq 1$ number times independently, and then reporting the majority answer. Call $Err$ the event that $RA_t$ reports a wrong answer; then,
\begin{eqnarray}
\nonumber
\Pr(Err)
& = & \Pr(\text{at least $\left\lceil \frac t 2 \right\rceil$ runs of $A(s,\phi)$ report the wrong answer})
\\
\nonumber
& = & \sum_{k=\left\lceil \frac t 2 \right\rceil}^t \Pr(\text{exactly $k$ runs of $A(s,\phi)$ report the wrong answer})
\\
& = & \sum_{k=\left\lceil \frac t 2 \right\rceil}^t \binom t k p^k (1-p)^{t-k}
\label{eq:exprErr}
\end{eqnarray}
where
\begin{eqnarray}
\nonumber
p & \defi & \Pr(A(s,\phi) \text{ reports the wrong answer})
\\
\nonumber
&=&
\left\{
\begin{array}{ll}
\Pr(A(s,\phi)= 0) & \qquad \mbox{if } \#\phi > 2^{s+\beta}\\
\Pr(A(s,\phi)= 1) & \qquad\mbox{if } \#\phi < 2^{s-\alpha}
\end{array}
\right.
\\
& \leq& \delta
\label{eq:prSingleErr}
\end{eqnarray}
Now, replacing \eqref{eq:prSingleErr} in \eqref{eq:exprErr}, we obtain
\begin{eqnarray}
\Pr(Err)
& \leq & \sum_{k=\left\lceil \frac t 2 \right\rceil}^t \binom t k \delta^k (1-\delta)^{t-k}
\label{eq:prErr}
\end{eqnarray}
Let us call $\Delta(t,\delta)$ the right hand side of \eqref{eq:prErr};
then, $RA_t$ is an $(\alpha,\beta,\Delta(t,\delta))$-decision algorithm
whenever $A$ is an $(\alpha,\beta,\delta)$-decision algorithm.


We now show that any $(\alpha,\beta,\delta)$-decision algorithm $A$
for \#SAT can be used as a  building block for a counting algorithm,
$C_A(\phi)$, that determines an interval  $[\ell,u]$ such that
$\lfloor 2^\ell \rfloor \leq \#\phi \leq \lceil 2^u \rceil$ with
high probability.
%
%
%
Informally, starting with an initial interval $[-1, m]$, the
algorithm $C_A$ performs a binary search,
  using $A$ to decide which half of the current interval $\log_2(\#\phi)$ lies in. The search stops when the current
  interval  cannot be further narrowed, taking into account the slack parameters $\alpha$ and $\beta$, or when a certain
  predefined number of iterations is reached.
Formally,  let $I_0\defi [-1, m]$. Assume $k>0$ and $I_k=[l_k,u_k]$,
then: 
\begin{itemize}
\item[(a)] if $u_k - l_k \leq 2\max(\alpha,\beta)+1$ or $k = \lceil \log_2(m)\rceil$, then return $I_k$;
\item[(b)] otherwise, let $s=\mathit{round}\!\left(\frac{u_k+l_k}2\right)$;  if $A(s,\phi)=0$
then $I_{k+1}\defi [l_k,s+\beta]$   otherwise  $I_{k+1}\defi [s-\alpha,u_k]$.
\end{itemize}

\begin{theorem}\label{th:CA}
Let $A$ be a $(\alpha,\beta,\delta)$-decision algorithm. Then:
\begin{enumerate}
\item $C_A(\phi)$ terminates in   $k\leq \lceil \log_2 m \rceil$ iterations
  returning an interval  $I_k=[l,u]$ such that  $u-l\leq 2\max(\alpha,\beta)+2$;
\item The probability that
$\#\phi\notin [\lfloor 2^l \rfloor, \lceil 2^u\rceil]$ is at most $ \lceil \log_2 m \rceil \delta$.
\end{enumerate}
\end{theorem}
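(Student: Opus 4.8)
The plan is to analyze the binary-search loop directly, treating the two parts of the theorem separately. For part~(1), termination is immediate from the stopping condition: the index $k$ can only increase, and case~(a) forces a stop once $k=\lceil\log_2 m\rceil$, so at most $\lceil\log_2 m\rceil$ iterations occur. For the width bound, I would track the invariant that each interval $I_k=[l_k,u_k]$ satisfies $u_k-l_k\le m+1$ (true for $I_0=[-1,m]$) and, more importantly, observe what happens at the \emph{last} iteration. If the loop halts via the first disjunct of~(a), then already $u_k-l_k\le 2\max(\alpha,\beta)+1\le 2\max(\alpha,\beta)+2$. If instead it halts because $k=\lceil\log_2 m\rceil$ was reached, I need to argue that the width has shrunk enough; here the key computation is that in case~(b) the new width is $\max(s+\beta-l_k,\,u_k-(s-\alpha))$, and since $s=\mathit{round}((u_k+l_k)/2)$ differs from the exact midpoint by at most $1/2$, each branch replaces roughly half the interval plus a slack of $\alpha$ or $\beta$, so $u_{k+1}-l_{k+1}\le \lceil (u_k-l_k)/2\rceil+\max(\alpha,\beta)$. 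Iterating this recurrence from width $\le m+1$ over $\lceil\log_2 m\rceil$ steps drives the ``halved'' part down to a constant and leaves an additive $\max(\alpha,\beta)$ plus lower-order terms, giving the claimed $u-l\le 2\max(\alpha,\beta)+2$. This recurrence bookkeeping — getting the rounding and the geometric sum to land exactly on the constant $+2$ — is the fussiest part of~(1), but it is routine.

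For part~(2), I would argue that the final interval $[l,u]$ fails to contain $\log_2(\#\phi)$ only if some invocation of $A$ during the search returned a wrong answer, and then apply a union bound. Concretely, I would prove by induction on $k$ the invariant: \emph{conditioned on all calls $A(s,\phi)$ made so far having returned correct answers, we have $\lfloor 2^{l_k}\rfloor \le \#\phi \le \lceil 2^{u_k}\rceil$.} The base case $I_0=[-1,m]$ holds since $0\le \#\phi\le 2^m$. For the inductive step in case~(b): if $A(s,\phi)=0$ and this answer is correct, then by Definition~\ref{def:decalg} (taking contrapositives of its clauses) $\#\phi\le 2^{s+\beta}$, which combined with the inductive lower bound keeps $\#\phi$ inside $[\lfloor 2^{l_k}\rfloor,\lceil 2^{s+\beta}\rceil]=[\lfloor 2^{l_{k+1}}\rfloor,\lceil 2^{u_{k+1}}\rceil]$; symmetrically if $A(s,\phi)=1$ is correct, then $\#\phi\ge 2^{s-\alpha}$ and the interval $[s-\alpha,u_k]$ still works.

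To finish~(2), note the algorithm makes at most $\lceil\log_2 m\rceil$ calls to $A$ (one per iteration, and there are at most $\lceil\log_2 m\rceil$ iterations by part~(1)). Each individual call returns a wrong answer with probability at most $\delta$: this is exactly the content of Definition~\ref{def:decalg}, via the same case split as in display~\eqref{eq:prSingleErr} — a call $A(s,\phi)$ can only be ``wrong'' when $\#\phi>2^{s+\beta}$ or $\#\phi<2^{s-\alpha}$, and in either case the error probability is bounded by $\delta$; when $2^{s-\alpha}\le \#\phi\le 2^{s+\beta}$ neither output of $A$ violates the invariant, so no call is ever ``harmful'' outside those regimes. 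By the union bound, the probability that \emph{some} call is wrong is at most $\lceil\log_2 m\rceil\,\delta$, and off that event the invariant at the final step gives $\#\phi\in[\lfloor 2^{l}\rfloor,\lceil 2^{u}\rceil]$. Hence $\Pr(\#\phi\notin[\lfloor 2^{l}\rfloor,\lceil 2^{u}\rceil])\le \lceil\log_2 m\rceil\,\delta$, as claimed.

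The main obstacle is less a deep difficulty than a matter of care: making sure the invariant in~(2) is stated in a way that is genuinely preserved — in particular handling the boundary case where $\#\phi$ lies in the ``slack zone'' $[2^{s-\alpha},2^{s+\beta}]$, where \emph{both} possible answers of $A$ are consistent with the invariant, so that the only way to leave the interval is an outright violation of one of the two clauses of Definition~\ref{def:decalg}, which by definition has probability $\le\delta$. Getting the width recurrence in~(1) to yield precisely the constant $2\max(\alpha,\beta)+2$ (rather than, say, $+3$) also requires watching the rounding direction carefully.
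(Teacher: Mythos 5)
Your proposal is correct and follows essentially the same route as the paper: the same width recurrence $w_{k+1}\le \tfrac{w_k}{2}+\max(\alpha,\beta)+\tfrac12$ iterated from $w_0=m+1$ for part (1), and a bound of $\lceil\log_2 m\rceil\,\delta$ on the probability that at least one of the calls to $A$ errs for part (2). Your explicit invariant and union bound in part (2) — in particular the observation that a call made while $\#\phi$ lies in the slack zone $[2^{s-\alpha},2^{s+\beta}]$ can never be harmful — is, if anything, slightly more careful than the paper's telescoping-sum computation, which reaches the same bound.
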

\iffull
\begin{proof}
If the algorithm terminates because $u_k - l_k \leq 2\max(\alpha,\beta)+1$, the first claim is trivial.
Otherwise, by construction of the algorithm, we have that
\begin{eqnarray}
|I_0| & = & m+1
\label{eq:I0}
\end{eqnarray}
Furthermore, by passing from $I_{k-1}$ to $I_{k}$, we have that
$s_k =round\!\left( \frac{u_{k-1}+l_{k-1}}2\right)$ and
$$
\begin{array}{rl}
|I_{k}| \leq &
\left\{
\begin{array}{ll}
\frac {u_{k-1}-l_{k-1}} 2 + \alpha + \frac 1 2 & \mbox{ if } A(s_k,m) = 1 \mbox{ and } s_k = \left\lfloor \frac {u_{k-1}-l_{k-1}} 2 \right\rfloor
\vspace*{.2cm}
\\
\frac {u_{k-1}-l_{k-1}} 2 + \beta & \mbox{ if } A(s_k,m) = 0  \mbox{ and } s_k = \left\lfloor \frac {u_{k-1}-l_{k-1}} 2 \right\rfloor
\vspace*{.2cm}
\\
\frac {u_{k-1}-l_{k-1}} 2 + \alpha & \mbox{ if } A(s_k,m) = 1 \mbox{ and } s_k = \left\lceil \frac {u_{k-1}-l_{k-1}} 2 \right\rceil
\vspace*{.2cm}
\\
\frac {u_{k-1}-l_{k-1}} 2 + \beta + \frac 1 2 & \mbox{ if } A(s_k,m) = 0  \mbox{ and } s_k = \left\lceil \frac {u_{k-1}-l_{k-1}} 2 \right\rceil
\end{array}
\right.
\end{array}
$$
since, by definition of $round$, $\frac{u_{k-1}+l_{k-1}}2 - \frac 1 2 < s_k \leq \frac{u_{k-1}+l_{k-1}}2$, if $s_k = \left\lfloor \frac{u_{k-1}+l_{k-1}}2 \right\rfloor$, and $\frac{u_{k-1}+l_{k-1}}2 \leq s_k < \frac{u_{k-1}+l_{k-1}}2  + \frac 1 2$, if $s_k = \left\lceil \frac{u_{k-1}+l_{k-1}}2 \right\rceil$.
Thus, by letting $M = \max(\alpha,\beta) + \frac 1 2$, we have:
\begin{eqnarray}
|I_k| & \leq & \frac{|I_{k-1}|} 2 + M
\label{eq:Ik}
\end{eqnarray}
If we now unfold \eqref{eq:Ik} and use \eqref{eq:I0}, we obtain that
\begin{eqnarray}
|I_k| & \leq & \frac 1 2 |I_{k-1}| + M
\nonumber\\
& \leq & \frac 1 2 \left(  \frac 1 2 |I_{k-2}| + M \right) + M
= \frac 1 4 |I_{k-2}| +\frac 3 2 M
\nonumber\\
\nonumber\\
& \ldots &
\nonumber\\
& \leq & \frac 1 {2^k} |I_0| + M\sum_{i=0}^{k-1} \frac 1 {2^i}
= \frac {m+1} {2^k} + M\left( 2 - \frac 1 {2^{k-1}} \right)
\label{eq:exprIk}
\end{eqnarray}
By now considering $k = \lceil\log_2 m\rceil$ in \eqref{eq:exprIk}, we obtain that
\begin{eqnarray*}
|I_{\lceil\log_2 m\rceil}|
& \leq & \frac {m+1} {2^{\lceil\log_2 m\rceil}} + M\left( 2 - \frac 1 {2^{\lceil\log_2 m\rceil-1}} \right)
\\
& \leq & \frac {m+1} {m} + M\left( 2 - \frac 1 {m} \right)
\\
& \leq & 2M+1 = 2\max(\alpha,\beta) + 2
\end{eqnarray*}
being $\log_2 m \leq \lceil\log_2 m\rceil \leq \log_2 m +1$
and $M > 1$.

The error probability is the probability that either
$\#\phi< \lfloor 2^l \rfloor$ or $\#\phi > 2^u$.
This is the probability that one of the $\lceil\log_2 m\rceil$ calls to $A$ has returned a wrong answer,
that can be calculated as follows:
$$
\sum_{k=1}^{\lceil\log_2 m\rceil} \Pr(OK_1)\ldots\Pr(OK_{k-1})\Pr(ERR_k)
=
\sum_{k=1}^{\lceil\log_2 m\rceil} (1-\delta)^{k-1}\delta
\leq
\lceil\log_2 m\rceil \delta
$$
where $\Pr(OK_i)$/$\Pr(ERR_i)$ denotes the probability that the $i$-th iteration
returned a correct/wrong answer, respectively.
\else
\begin{proof}[Sketch]
If the algorithm terminates because $u_k - l_k \leq 2\max(\alpha,\beta)+1$, the first claim is trivial.
Otherwise, by construction of the algorithm, we have that
$|I_0| = m+1$. Furthermore, by passing from $I_{k-1}$ to $I_{k}$, we have that
$s_k =round\!\left( \frac{u_{k-1}+l_{k-1}}2\right)$ and
$$
\begin{array}{rl}
|I_{k}| \leq &
\left\{
\begin{array}{ll}
\frac {u_{k-1}-l_{k-1}} 2 + \alpha + \frac 1 2 & \mbox{ if } A(s_k,m) = 1 \mbox{ and } s_k = \left\lfloor \frac {u_{k-1}-l_{k-1}} 2 \right\rfloor
\vspace*{.2cm}
\\
\frac {u_{k-1}-l_{k-1}} 2 + \beta & \mbox{ if } A(s_k,m) = 0  \mbox{ and } s_k = \left\lfloor \frac {u_{k-1}-l_{k-1}} 2 \right\rfloor
\vspace*{.2cm}
\\
\frac {u_{k-1}-l_{k-1}} 2 + \alpha & \mbox{ if } A(s_k,m) = 1 \mbox{ and } s_k = \left\lceil \frac {u_{k-1}-l_{k-1}} 2 \right\rceil
\vspace*{.2cm}
\\
\frac {u_{k-1}-l_{k-1}} 2 + \beta + \frac 1 2 & \mbox{ if } A(s_k,m) = 0  \mbox{ and } s_k = \left\lceil \frac {u_{k-1}-l_{k-1}} 2 \right\rceil
\end{array}
\right.
\end{array}
$$
Thus, by letting $M = \max(\alpha,\beta) + \frac 1 2$, we have that $|I_k| \leq \frac{|I_{k-1}|} 2 + M$;
this suffices to obtain $|I_{\lceil\log_2 m\rceil}| \leq 2\max(\alpha,\beta) + 2$.

The error probability is the probability that either
$\#\phi< \lfloor 2^l \rfloor$ or $\#\phi > 2^u$;
this is the probability that one of the $\lceil\log_2 m\rceil$ calls to $A$ has returned a wrong answer.
\fi
\qed
\end{proof}

In all the experiments we have run, the algorithm has always
returned an interval of width at most $2\max(\alpha,\beta)+1$,
sometimes in   less than $\lceil\log_2(m)\rceil$ iterations: this
consideration pragmatically justifies the exit condition we used in
the algorithm. 
We leave for future work a more elaborated analysis of the algorithm
to formally establish   the $2\max(\alpha,\beta)+1$ bound.

\ifmai
More precisely, consider the binary random
sequence $(A(j,\phi))_{j\geq 0}$ and the random integer $s$ ($0\leq
s\leq m$) defined as follows, where $m$ is the number of variables:
\begin{eqnarray*}
s & \defi & \min\left(\{ j\,: A(j,\phi)=1\}\cup \{m\}\right)\,.
\end{eqnarray*}
We let $C_A(\phi)$ return $[\ell,u]\defi
[s-\alpha,\min\{m,s+\beta\}]$.
The algorithm  $C_A$ can be easily implemented by a linear search
that invokes $A(j,\phi)$ starting from $j=0$, until $1$ is returned.

\begin{theorem}[correctness of $C_A$]\label{th:CA}
Let $\phi$ have $m$ variables and $C_A(\phi)$ return  $[\ell,u]$.
Then, $\Pr\left(\#\phi\notin [\lfloor 2^\ell \rfloor, \lceil 2^u
\rceil]\right) \leq 2 \delta$.
\end{theorem}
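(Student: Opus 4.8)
Write $N=\#\phi$ and, provided $1\leq N\leq 2^m$, put $n=\log_2 N\in[0,m]$ (the corner cases $N=0$ and $N=2^m$ will be handled separately). Recall that $C_A(\phi)$ returns $[\ell,u]=[s-\alpha,\min\{m,s+\beta\}]$, where $s=\min(\{j:A(j,\phi)=1\}\cup\{m\})$ is the random index at which the linear search halts. My plan is first to replace the error event ``$\#\phi\notin[\lfloor 2^\ell\rfloor,\lceil 2^u\rceil]$'' by a cleaner event about $s$ alone: on the one hand, $N<\lfloor 2^\ell\rfloor\leq 2^{s-\alpha}$ yields $s>n+\alpha$; on the other hand, $N>\lceil 2^u\rceil\geq 2^u$ together with $N\leq 2^m$ rules out $u=m$, so $u=s+\beta$ and $N>2^{s+\beta}$, i.e.\ $s<n-\beta$. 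Thus the error event sits inside $\{s>n+\alpha\}\cup\{s<n-\beta\}$, and it will be enough to bound each of these two probabilities by $\delta$ and take a union bound.

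Next I would bound $\Pr(s>n+\alpha)$, the probability that the search overshoots. The idea is to single out one threshold whose query decides whether overshooting happens, so that a single use of Definition~\ref{def:decalg} suffices: let $j^\dagger$ be the least integer with $j^\dagger>n+\alpha$ that the search reaches (if no such integer is $\leq m$ the event is empty). Since $N<2^{\,j^\dagger-\alpha}$, clause~2 of Definition~\ref{def:decalg} says that with probability $\geq 1-\delta$ the output $A(j^\dagger,\phi)$ is the one consistent with ``$N$ small'', and I would then argue, from the monotone structure of the search, that this is exactly the output that halts the search no later than $j^\dagger$; hence $\Pr(s>n+\alpha)\leq\delta$. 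Symmetrically, to bound $\Pr(s<n-\beta)$ I would take the greatest integer $j^\flat\leq n-\beta$ reached by the search; there $N>2^{\,j^\flat+\beta}$, so clause~1 of Definition~\ref{def:decalg} pins $A(j^\flat,\phi)$ to the ``$N$ large'' output with probability $\geq 1-\delta$, which is precisely the output preventing a halt before $j^\flat$; so $\Pr(s<n-\beta)\leq\delta$. Adding the two gives $\Pr(\#\phi\notin[\lfloor 2^\ell\rfloor,\lceil 2^u\rceil])\leq 2\delta$.

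The hard part is exactly this last reduction, from ``the search errs'' to ``one designated query errs''. The search calls $A(\cdot,\phi)$ at up to $m+1$ thresholds, each fallible with probability up to $\delta$, so a naive union bound over all of them gives a useless $O(m\delta)$ estimate; the whole point is that $A$'s answer is genuinely uncertain only inside the window $[n-\beta,n+\alpha]$ around $n=\log_2 N$, and is returned correctly with probability $\geq 1-\delta$ outside it. Making this rigorous will require (i) fixing the rounding in the definitions of $j^\dagger$ and $j^\flat$ so that they really fall outside the uncertainty window and are really visited, (ii) controlling the boundary effects of $\lfloor\cdot\rfloor$, $\lceil\cdot\rceil$ and of the cap $s\leq m$, and (iii) disposing of the corner cases $N=0$ and $N=2^m$, where one of the two events above is vacuous. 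These are routine once the window picture is in place, but they are where the constant $2$ in the bound is actually paid for.
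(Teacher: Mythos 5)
Your overall strategy --- split the error into an overshoot event $\{s>n+\alpha\}$ and an undershoot event $\{s<n-\beta\}$, charge each to one designated call of $A$, and union-bound --- is the same as the paper's own argument, which charges the two events to the calls $A(s-1,\phi)$ and $A(s,\phi)$. Unfortunately, the part you defer as ``routine boundary effects'' is precisely where the argument breaks, on both sides. \emph{Overshoot:} let $j^\dagger$ be the least integer exceeding $n+\alpha$. Clause~2 of Definition~\ref{def:decalg} does apply at $j^\dagger$ and gives $\Pr(A(j^\dagger,\phi)=1)\leq\delta$, hence $\Pr(s>j^\dagger)\leq\delta$; but your target event $\{s>n+\alpha\}$ equals $\{s\geq j^\dagger\}$ (both sides being integers), and the residual event $\{s=j^\dagger\}$ is realized exactly when $A(j^\dagger,\phi)$ returns the \emph{correct} answer $0$ after $A$ returned $1$ at $j^\dagger-1$. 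Since $j^\dagger-1\leq n+\alpha$, we have $2^{j^\dagger-1-\alpha}\leq N$, so clause~2 says nothing about $A(j^\dagger-1,\phi)$, and $\Pr(s=j^\dagger)$ is not bounded by $\delta$. Concretely, take $N=100$ and $\alpha=1.5$: an algorithm that deterministically answers $1$ for all $j\leq 8$ and answers $0$ at $j=9$ with probability $1-\delta$ satisfies Definition~\ref{def:decalg}, yet halts at $s=9$ with probability $1-\delta$ and reports $\lfloor 2^{7.5}\rfloor=181$ as a lower bound for $100$. So this half of the gap cannot be closed by bookkeeping alone; one must either widen the reported lower end to $s-1-\alpha$ or strengthen the hypotheses on $A$.

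\emph{Undershoot:} the claim that $A(j^\flat,\phi)=1$ ``prevents a halt before $j^\flat$'' is not correct: the linear search halts at the first $j$ at which $A$ answers $0$, so it may halt at some $j<j^\flat$ without ever querying $j^\flat$. The event $\{s<n-\beta\}$ is the union of the events $\{A(j,\phi)=0\}$ over all integers $j$ below $n-\beta$; clause~1 bounds each of these by $\delta$, but Definition~\ref{def:decalg} imposes no monotonicity or coupling between the answers at different thresholds --- each call is an independent randomized decision --- so the ``monotone structure'' you invoke exists for the indices visited, not for the answers, and the union costs a factor of order $n-\beta$ rather than $1$. In fairness, the paper's own sketch is terse on exactly the same two points (it applies the two clauses at the random indices $s$ and $s-1$); note that the version of $C_A$ actually retained in the paper is a binary search, whose error is honestly bounded by $\lceil\log_2 m\rceil\,\delta$, i.e.\ one $\delta$ per call actually made.
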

\begin{proof}
Let $s$ be the minimum in $[0,m]$ such that either $A(s,\phi)=0$ or $s=m$.

If $A(s,\phi)=0$ and $s > 0$, then $A(s-1,\phi)=1$.
Since the error probability is the probability that either
$\#\phi< \lfloor 2^\ell \rfloor$ or $\#\phi > \lceil 2^u \rceil$,
then $\Pr(A(s-1,\phi)=1) \leq \delta$ and $\Pr(A(s,\phi)=0) \leq \delta$, respectively
(since $\alpha > 1$ and $\beta > 0$);
a simple union bound on the probability of error will
then give us the result.

Let us now consider the limit cases, when (1) $s=0$, or (2) $s=m$ and $A(m,\phi)=1$.
In the first case, $C_A(\phi)$ returns $[-\alpha,\beta]$; by assumption on $A$, $\#\phi > \lceil 2^\beta \rceil$
happens with probability at most $\delta$, whereas it is not possible that $\#\phi < \lfloor 2^{-\alpha} \rfloor = 0$.
In the second case, $A(s,\phi)=1$, for all $s \leq m$; then, it is not possible that
$\#\phi > \lceil 2^{m+\beta} \rceil$ (indeed, $\#\phi \leq 2^m$), whereas $\#\phi < \lfloor 2^{m-\alpha} \rfloor$
can happen with probability at most $\delta$.
\qed
\end{proof}
\fi

\subsection{XOR-based decision algorithms}\label{subs:xorbased}

Recall that a XOR constraint $c$ on the variables $y_1,...,y_m$ is an equality  of the form
\begin{eqnarray} \nonumber
a_0 & = & a_1y_1\oplus \cdots \oplus a_m y_m
\end{eqnarray}
where $a_i\in \mathbb{F}_2$ for $i=0,...,m$ (here $\mathbb{F}_2=\{0,1\}$ is the two elements field.) Hence $c$ can be identified with a $(m+1)$-tuple in $\mathbb{F}^{m+1}_2$.
Assume that a probability distribution on $\mathbb{F}^{m+1}_2$ is fixed. A simple proposal for a decision algorithm $A(s,\phi)$ is as follows:
\begin{enumerate}
\item generate $s$ XOR constraints $c_1,\ldots,c_s$ independently,   according to the fixed probability distribution;
\item if $\phi\wedge c_1\wedge\cdots \wedge c_s$ is unsatisfiable then return 0, 
else return 1. 
\end{enumerate}
Indeed \cite{Gomes}, every XOR constraint  splits the set of boolean
assignments in two parts, according to whether the assignment
satisfies the constraint or not. Thus, if $\phi$ has less than $2^s$
models (and so less than $2^{s+\beta}$), the formula $\phi\wedge
c_1\wedge\cdots \wedge c_s$ is likely to be unsatisfiable.

In step 2 above, any off-the-shelf SAT  solver can be employed: one
appealing possibility is using CryptoMiniSat \cite{CMS}, which
offers support for specifying XOR constraints (see e.g.
\cite{Legay18,approxmc,KWW16}). Similarly to \cite{Gomes}, it can be
proved that this algorithm yields  indeed an $
(\alpha,\beta,\delta)$-decision algorithm, for a suitable
$\delta<\frac 12$, if the constraints $c_i$ at step 1 are chosen
uniformly at random. This however will generate `long' constraints,
with an average of $\frac m 2$ variables each, which a SAT solver will not
be able to manage as $m$ grows.

\section{Counting with sparse XORs}\label{sec:three}

We want to explore alternative ways of generating     constraints,
which will   make it possible to work with short (`sparse') XOR
constraints, while keeping the same guarantees of correctness. In
what follows, we assume a probability distribution over the
constraints, where each coefficient $a_i$, for $i>0$, is chosen independently with
probability $\lambda$, while $a_0$ is chosen uniformly (and independently from all the other $a_i$'s). 
In other words, we assume that the
probability distribution over $\mathbb{F}^{m+1}_2$ is of the
following form, for $\lambda\in \left(0,\frac12\right]$:
\begin{eqnarray}\label{eq:distr}
\Pr(a_0,a_1,...,a_m) & \defi & p(a_0)  p'(a_1)\cdots p'(a_m)\\
\text{ where: }&& p(1)=p(0)=\frac 1 2 \qquad p'(1)=\lambda\qquad p'(0)=1-\lambda\nonumber
\end{eqnarray}
The
expected number of variables appearing in a constraint chosen
according to this distribution will therefore be $m\lambda$. Let us still
call $A$ the algorithm presented in Section \ref{subs:xorbased},
with this strategy in choosing the constraints. We want to establish
conditions on $\lambda$ under which $A$ can be proved to be a
decision algorithm.

Throughout this section, we fix a boolean formula $\phi(y_1,...,y_m)$ and
   $s\geq 1$ 
XOR constraints. Let
$$\chi_s\defi \phi\wedge c_1\wedge\cdots \wedge c_s$$
where the $c_i$ are chosen independently according to
\eqref{eq:distr}.   For any assignment (model)  $\sigma$ from
variables $y_1,...,y_m$ to $\mathbb{F}_2$, let us denote by
$Y_\sigma$ the Bernoulli r.v. which is $1$ iff $\sigma$ satisfies
$\chi_s$.
 
We now list the steps needed for proving that $A$ is a decision algorithm.
This latter result is obtained by Proposition \ref{prop:variance}(2) (that derives from Lemma \ref{lemma:basic}(1))
and by combining Proposition \ref{prop:variance}(1),
Lemma \ref{cor:variance}(3) (that derives from Lemma \ref{lemma:basic}(2)) and Lemma \ref{lemma:S}  
in Theorem \ref{th:mainBound} later on.
In what follows, we shall let $\rho\defi 1-2\lambda$.

\begin{lemma}\label{lemma:basic}\
\begin{enumerate}
\item $\Pr(Y_\sigma =1)= E[Y_\sigma]= 2^{-s}$.
\item Let $\sigma,\sigma'$ be any two assignments and $d$  be
their Hamming distance in $\mathbb{F}_2^m$ (i.e., the size of their
symmetric difference  seen as subsets of $\{1,...,m\}$). Then
$\Pr(Y_\sigma =1,Y_{\sigma'} =1)=E[Y_\sigma\cdot
Y_{\sigma'}]=\left(\frac{1+\rho^d}4\right)^s$ (where we let $\rho^d\defi 1$ whenever
$\rho=d=0$.)
\end{enumerate}
\end{lemma}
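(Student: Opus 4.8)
The plan is to reduce everything to the analysis of a \emph{single} XOR constraint, exploiting the fact that $c_1,\ldots,c_s$ are independent. Write $c=(a_0,a_1,\ldots,a_m)$ for a generic constraint drawn according to \eqref{eq:distr}, and for an assignment $\sigma$ put $Z_\sigma\defi\bigoplus_{j=1}^m a_j\sigma(y_j)\in\mathbb{F}_2$, so that $\sigma\models c$ iff $a_0=Z_\sigma$. I will assume throughout, as the parenthetical ``model'' in the statement suggests, that $\sigma$ (and $\sigma'$) are models of $\phi$; otherwise $Y_\sigma\equiv 0$ and there is nothing to prove. Since $Y_\sigma=1$ exactly when $\sigma\models c_i$ for all $i=1,\ldots,s$, and the $c_i$ are i.i.d., both probabilities in the statement factor as the $s$-th power of the corresponding single-constraint probability; so it suffices to compute $\Pr(\sigma\models c)$ and $\Pr(\sigma\models c,\ \sigma'\models c)$.

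For part (1): $a_0$ is uniform on $\mathbb{F}_2$ and independent of $(a_1,\ldots,a_m)$, hence independent of $Z_\sigma$; conditioning on $Z_\sigma$ therefore gives $\Pr(a_0=Z_\sigma)=\frac12$, irrespective of $\sigma$ and $\lambda$. Raising to the $s$-th power yields $\Pr(Y_\sigma=1)=2^{-s}$, and $E[Y_\sigma]=\Pr(Y_\sigma=1)$ since $Y_\sigma$ is Bernoulli.

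For part (2): $\sigma\models c$ and $\sigma'\models c$ hold simultaneously iff $a_0=Z_\sigma$ and $Z_\sigma=Z_{\sigma'}$. Now $Z_\sigma\oplus Z_{\sigma'}=\bigoplus_{j=1}^m a_j\big(\sigma(y_j)\oplus\sigma'(y_j)\big)=\bigoplus_{j\in D}a_j$, where $D\subseteq\{1,\ldots,m\}$ is the set of coordinates on which $\sigma$ and $\sigma'$ differ, $|D|=d$. Using once more that the uniform $a_0$ is independent of $(a_1,\ldots,a_m)$, we get $\Pr(\sigma\models c,\ \sigma'\models c)=\frac12\cdot\Pr\!\big(\bigoplus_{j\in D}a_j=0\big)$. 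The remaining factor is the probability that an even number of $d$ independent $\mathrm{Bernoulli}(\lambda)$ bits equal $1$; by the standard generating-function identity (half the sum of $(\lambda+(1-\lambda))^d=1$ and $((1-\lambda)-\lambda)^d=(1-2\lambda)^d=\rho^d$) this equals $\frac{1+\rho^d}{2}$, the convention $\rho^0\defi 1$ covering the degenerate case $d=0$ (and, when $\lambda=\frac12$, $\rho=0$). Hence the single-constraint probability is $\frac{1+\rho^d}{4}$, and raising to the $s$-th power gives $\Pr(Y_\sigma=1,Y_{\sigma'}=1)=\big(\frac{1+\rho^d}{4}\big)^s$; this equals $E[Y_\sigma\cdot Y_{\sigma'}]$ since $Y_\sigma\cdot Y_{\sigma'}$ is precisely the indicator of that event.

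I do not expect a real obstacle here: the only point needing a little care is the parity computation for $\bigoplus_{j\in D}a_j$ and the bookkeeping of the degenerate cases $d=0$ and $\lambda=\frac12$ (where the convention $\rho^d\defi 1$ is used); everything else follows immediately from the independence structure of \eqref{eq:distr}, in particular the uniformity and independence of $a_0$, which is what forces the factor $\frac12$ in both computations regardless of $\lambda$ and of the assignments involved.
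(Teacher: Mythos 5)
Your proof is correct, and for part (2) it takes a genuinely different, cleaner route than the paper's. The paper conditions on the parity bit $a_0$, splits the variables into $U=A\setminus B$, $I=A\cap B$, $V=B\setminus A$, and grinds through an even/odd case analysis on $|C\cap U|,|C\cap I|,|C\cap V|$ before the terms involving $I$ cancel and the same parity identity $\tfrac12(1+(1-2\lambda)^t)$ is applied to $|U|$ and $|V|$ separately. You instead exploit the linear structure directly: writing $\sigma\models c$ as $a_0=Z_\sigma$, the joint event becomes $\{a_0=Z_\sigma\}\cap\{Z_\sigma\oplus Z_{\sigma'}=0\}$, and the XOR $Z_\sigma\oplus Z_{\sigma'}=\bigoplus_{j\in D}a_j$ depends only on the symmetric difference $D$, so the common part $I$ never enters the computation; the uniform, independent $a_0$ then contributes a clean factor $\tfrac12$ (your conditioning argument for this step is sound), and a single application of the parity identity to the $d$ bits indexed by $D$ finishes the job. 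What your approach buys is brevity and the conceptual point that only the coordinates where $\sigma$ and $\sigma'$ differ matter; what the paper's buys is essentially nothing extra here, beyond making the intermediate quantities $U_e,V_e$ explicit. Your handling of the reduction to $s=1$ via independence, of the degenerate cases $d=0$ and $\lambda=\tfrac12$, and of the implicit assumption that $\sigma,\sigma'$ are models of $\phi$ (without which part (1) as stated would fail) are all correct and match the paper's intent.
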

\iffull
\begin{proof}
Concerning the first item, note that the probability that $\sigma$ satisfies any constraint chosen according to \eqref{eq:distr} is $\frac 12$, because the parity bit $a_0$ is chosen uniformly at random; consequently, the probability that $\sigma$ satisfies the $s$ constraints $c_1,\ldots,c_s$  chosen independently is $\frac 1 {2^s}$.

Let us examine the second item. We first consider the case of $s=1$ constraint. Let $A$ and $B$ be the sets of variables that are assigned the value 1 in $\sigma$ and $\sigma'$, respectively.   We let   $U=A\setminus B$, $V=B\setminus A$ and $I=A\cap B$. Note that by definition $d=|U\cup V|=|U|+|V|$.  Let $C$ be the set of  variables appearing in the constraint $c_1$.
Assume first that the constraint's parity bit $a_0$ is $0$, which happens with probability $\frac 12$. Given this event, both $\sigma$ and $\sigma'$ satisfy the constraint if and only if both $|C\cap A|$ and $|C\cap B|$ are even. This is in turn equivalent to   (all of $|C\cap U|, |C\cap I|$, $|C\cap V|$ are even)  or (all of  $|C\cap U|, |C\cap I|$, $|C\cap V|$ are odd). Moreover these two events are clearly disjoint. Since the three sets of variables $U,I,V$ are pairwise disjoint, and since according to \eqref{eq:distr} the constraint's variables are chosen independently, we can compute as follows. Here, $U_e$ and $U_o$ abbreviate $\Pr( |C\cap U|\text{ is even})$ and $\Pr( |C\cap U|\text{ is odd})$, respectively; similarly for $I_e,I_o$ and $V_e,V_o$.
\begin{eqnarray}
&&\Pr(Y_\sigma=1,Y_{\sigma'}=1|a_0=0)\ = \nonumber\\
&&\qquad =\ \Pr(\text{ $|C\cap U|, |C\cap I|$, $|C\cap V|$ are   even})\nonumber\\
&&  \qquad\quad \ + \ \Pr(\text{$|C\cap U|, |C\cap I|$, $|C\cap V|$ are   odd})\nonumber\\
&&\qquad =\  U_e\cdot I_e\cdot V_e+U_o\cdot I_o\cdot V_o\label{eq:one}
\end{eqnarray}
Reasoning similarly, we obtain
\begin{eqnarray}
&& \Pr(Y_\sigma=1,Y_{\sigma'}=1|a_0=1) \nonumber\\
&& \qquad =\ \Pr(\text{$|C\cap I|$ is even, $|C\cap U|, |C\cap I|$ are odd})\nonumber\\
&&  \qquad\quad +\ \Pr(\text{$|C\cap I|$ is odd, $|C\cap U|, |C\cap I|$ are even})\nonumber\\[5pt]
&& \qquad =\ U_o\cdot I_e\cdot V_o + U_e\cdot I_o\cdot V_e  \nonumber\\[5pt]
&& \qquad =\ U_o\cdot (1-I_o)\cdot V_o + U_e\cdot(1- I_e)\cdot V_e  \nonumber  \\[5pt]
&& \qquad =\ U_o\cdot V_o + U_e\cdot V_e-  U_o\cdot I_o\cdot V_o - U_e\cdot I_e\cdot V_e   \label{eq:two}
\end{eqnarray}
By using \eqref{eq:one} and \eqref{eq:two}, we can compute
\begin{eqnarray}
\Pr(Y_\sigma=1,Y_{\sigma'}=1 ) & = & \frac 1 2\Pr(Y_\sigma=1,Y_{\sigma'}=1|a_0=0)  \nonumber\\
&& +\ \frac 1 2 \Pr(Y_\sigma=1,Y_{\sigma'}=1|a_0=1)\nonumber\\[5pt]
                                                             & = & \frac 1 2 \left(U_o\cdot V_o + U_e\cdot V_e \right)\nonumber\\[5pt]
                                                             & = & \frac 1 2 \left((1-U_e)\cdot (1-V_e) + U_e\cdot V_e\right)\nonumber\\[5pt]
                                                             & = & \frac 1 2 \left( 1- V_e - U_e + 2 U_e\cdot V_e \right)\label{eq:three}
\end{eqnarray}
%
Now, it is a standard result (see e.g. \cite{mathexch}) that, given $t$ independent trials of a Bernoulli variable of parameter $\lambda\leq \frac 1 2$, the probability of obtaining an even number of 1 is
$
\frac 1 2(1+(1-2\lambda)^t)
$.
Recalling that we have posed $\rho=1-2\lambda$, and replacing $t$ with $|U|$ or $|V|$, we therefore have
$$
U_e \ = \ \frac 1 2 (1+\rho^{|U|})
\qquad\qquad
V_e \ = \ \frac 1 2 (1+\rho^{|V|})
$$
Plugging the above two equations  into \eqref{eq:three}, with some algebra we get
\begin{eqnarray*}
\Pr(Y_\sigma=1,Y_{\sigma'}=1) & = & \frac 1 2 \left(1-\frac 1 2 (1+\rho^{|U|})- \frac 1 2 (1+\rho^{|V|})
 + \frac 1 2 (1+\rho^{|U|})(1+\rho^{|V|})\right)\\[5pt]
                                                                        & = & \frac 1 2 \left(-\frac{\rho^{|U|}}2 -\frac{\rho^{|V|}}2
                                                                        + \frac 1 2\left(1+ \rho^{|U|}+\rho^{|U|}+\rho^{|U|+|V|}\right)\right)\\[5pt]
                                                                        & = & \frac 1 4 (1+\rho^d)\,.
\end{eqnarray*}
This completes  proof that the probability that both $\sigma$ and $\sigma'$ survive  $s=1$   constraint    is $ \frac 1 4 (1+\rho^d)$. The general case of $s\geq 1$ independent constraints  is immediate.
\else
\begin{proof}[Sketch]
The first claim holds by construction. For the second claim, the crucial thing to prove is that, by fixing
just one constraint $c_1$ (so, $s=1$), we have that $\Pr(Y_\sigma=1,Y_{\sigma'}=1) =  \frac {1+\rho^d} 4$.
To this aim, call $A$ and $B$ the sets of variables that are assigned value 1 in $\sigma$ and $\sigma'$, respectively; then, we let   $U=A\setminus B$, $V=B\setminus A$ and $I=A\cap B$.
 Let $C$ be the set of  variables appearing in the constraint $c_1$;
then, $U_e$ and $U_o$ abbreviate $\Pr( |C\cap U|\text{ is even})$ and $\Pr( |C\cap U|\text{ is odd})$, respectively; similarly for $I_e,I_o$ and $V_e,V_o$. Then,
$$
\begin{array}{l}
\Pr(Y_\sigma=1,Y_{\sigma'}=1|a_0=0) = U_e I_e V_e+U_o I_o V_o
\vspace*{.2cm}
\\
\Pr(Y_\sigma=1,Y_{\sigma'}=1|a_0=1) = U_o V_o + U_e V_e-  U_o I_o V_o - U_e I_e V_e
\end{array}
$$
By elementary probability theory, 
$\Pr(Y_\sigma=1,Y_{\sigma'}=1) = \frac 1 2 (1- V_e- U_e+ 2 U_e V_e )$;
the result is obtained by noting that $U_e \ = \ \frac 1 2 (1+\rho^{|U|})$,
$V_e \ = \ \frac 1 2 (1+\rho^{|V|})$ and $d=|U\cup V|=|U|+|V|$.
\fi
\qed
\end{proof}

Now  let
$T_s$ be the random variable  that counts the number of models of $\chi_s$, when the  constraints $c_1,...,c_s$ are chosen independently according to distribution \eqref{eq:distr}:
\begin{eqnarray*}
T_s & \defi & \#\chi_s\,.
\end{eqnarray*}
The event that $\chi_s$ is unsatisfiable   can be expressed as $T_s=0$. A first step toward establishing conditions under which $A$ yields a decision algorithm is the following result. It makes it clear that a possible strategy is to keep  under control the variance of $T_s$,  which depends in turn on $\lambda$. Let us denote by $\mu_s$ the expectation of $T_s$ and by $\var(T_s)$ its variance. Note that $\var(T_s)>0$ if $\#\phi>0$. 

\begin{proposition}\label{prop:variance} \
\begin{enumerate}
\item $\#\phi > 2^{s+\beta}$ implies $\Pr\left(A(s,\phi)=0 \right) \leq \frac{1}{1+\frac {\mu_s^2}{\var(T_s)}}$;
\item $\#\phi < 2^{s-\alpha}$ implies  $\Pr\left(A(s,\phi)=1 \right) < 2^{-\alpha}$.
\end{enumerate}
\end{proposition}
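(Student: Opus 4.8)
The plan is to phrase both items in terms of the nonnegative, integer-valued random variable $T_s=\#\chi_s$, using that $A(s,\phi)=0$ holds exactly when $\chi_s$ is unsatisfiable, i.e. $T_s=0$, and $A(s,\phi)=1$ holds exactly when $T_s\geq 1$. Since $T_s=\sum_{\sigma\models\phi}Y_\sigma$, Lemma \ref{lemma:basic}(1) gives $\mu_s=E[T_s]=2^{-s}\,\#\phi$.

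For item (2) I would simply apply Markov's inequality to $T_s\geq 0$:
\[
\Pr\bigl(A(s,\phi)=1\bigr)=\Pr(T_s\geq 1)\leq E[T_s]=2^{-s}\,\#\phi<2^{-s}\cdot 2^{s-\alpha}=2^{-\alpha},
\]
where the strict inequality uses the hypothesis $\#\phi<2^{s-\alpha}$.

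For item (1) the key idea is to use the second-moment (Cauchy--Schwarz, a.k.a. Paley--Zygmund) bound rather than a plain Chebyshev estimate; this is precisely what yields the denominator $1+\mu_s^2/\var(T_s)$ rather than the weaker $\mu_s^2/\var(T_s)$ that Chebyshev alone would give. Writing $T_s=T_s\cdot\mathbf{1}_{\{T_s>0\}}$ and applying Cauchy--Schwarz,
\[
\mu_s^2=\bigl(E[\,T_s\,\mathbf{1}_{\{T_s>0\}}\,]\bigr)^2\leq E[T_s^2]\cdot E[\mathbf{1}_{\{T_s>0\}}]=E[T_s^2]\cdot\Pr(T_s>0).
\]
Since $\#\phi>2^{s+\beta}>0$ we have $\mu_s>0$, hence $E[T_s^2]=\var(T_s)+\mu_s^2>0$ (and $\var(T_s)>0$, as noted before the statement), so the display rearranges to $\Pr(T_s>0)\geq \mu_s^2/E[T_s^2]$. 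Consequently
\[
\Pr\bigl(A(s,\phi)=0\bigr)=\Pr(T_s=0)=1-\Pr(T_s>0)\leq 1-\frac{\mu_s^2}{E[T_s^2]}=\frac{\var(T_s)}{\var(T_s)+\mu_s^2}=\frac{1}{1+\frac{\mu_s^2}{\var(T_s)}}.
\]

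There is no genuinely hard step here: item (2) is one line of Markov and item (1) is one line of Cauchy--Schwarz. The only points needing care are recognising that the two hypotheses are used only to make the right-hand sides meaningful and to produce the strict inequality in (2) — the probabilistic content beyond that is entirely in Lemma \ref{lemma:basic}(1) — and deliberately choosing Cauchy--Schwarz over Chebyshev so that the bound in (1) matches the stated form. The dependence on $\lambda$ and on the minimum Hamming distance enters only afterwards, when $\var(T_s)$ is estimated via Lemma \ref{lemma:basic}(2).
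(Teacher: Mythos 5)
Your proof is correct and follows essentially the same route as the paper: item (2) is the identical Markov-inequality argument, and item (1) uses the same second-moment bound $\Pr(T_s=0)\leq \var(T_s)/(\var(T_s)+\mu_s^2)$, which the paper simply cites as the Cantelli--Chebyshev/Alon--Spencer inequality for nonnegative integer random variables whereas you derive it from Cauchy--Schwarz. The derivation is the standard proof of that cited inequality, so the only difference is that your write-up is slightly more self-contained.
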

\iffull
\begin{proof}
For part 1, 
\begin{eqnarray*}
  \Pr\left(A(s,\phi)=0 \right) & = & \Pr(T=0)\ \leq\ \frac{\var(T)}{\var(T)+\mu^2}\ =\ \frac{1}{1+\frac{\mu^2}{\var(T)}}
\end{eqnarray*}
where the last but one step is a version of the Cantelli-Chebyshev inequality for integer nonnegative random variables  (see e.g. \cite[Ch.2, Ex.2]{Lugosi}, aka Alon-Spencer's inequality.)

For part 2, first recall from \cite{Gomes} that $\mu =  \frac {\#\phi}{2^s}$
(this can be obtained by Lemma \ref{lemma:basic}(1) and by observing that
$T = \sum_\sigma Y_\sigma$: indeed,
$\mu=E[T] = E[\sum_\sigma Y_\sigma] = \sum_\sigma E[Y_\sigma] = \sum_\sigma 2^{-s} =   \frac {\#\phi}{2^s}$).
To conclude,
\begin{eqnarray*}
  \Pr\left(A(s,\phi)=1 \right) & = & \Pr(T \geq 1)\ \leq\ \mu\ <\ 2^{-\alpha}
\end{eqnarray*}
where the last but one step is Markov's inequality and the last one follows by noting that
$  \frac {\#\phi}{2^s} < 2^{-\alpha}$ under the given hypothesis.
\else
\begin{proof}[Sketch]
The first claim relies on a version of the Cantelli-Chebyshev inequality for integer nonnegative random variables  
(a.k.a. Alon-Spencer's inequality); the second claim relies on Lemma \ref{lemma:basic}(1) and Markov's inequality.
\fi
\qed
\end{proof}

By the previous proposition, assuming $\alpha>1$,
we obtain a decision algorithm (Definition \ref{def:decalg}) provided  that   $\var(T_s)<\mu_s^2$.
  This will depend on the value of $\lambda$ that is chosen, which leads to the following definition.

\begin{definition}[feasibility]\label{def:feas} Let $\phi$, $s$ and $\beta$ be given. A value
$\lambda\in \left(0,\frac 12\right]$ is said to be $(\phi,s,\beta)$-\emph{feasible}   if $\#\phi> 2^{s+\beta}$ implies $\var(T_s)<\mu_s^2$, where
   the constraints in $\chi_s$ are chosen according to  \eqref{eq:distr}.
\end{definition}


Our goal is now to give a method to minimize $\lambda$ while preserving feasibility.
Recall that   $T_s\defi \#\chi_s$.  Denote by $\sigma_1,...,\sigma_N$   the distinct models of $\phi$  (hence, $T_s \leq N$). Note that $T_s=\sum_{i=1}^N Y_{\sigma_i}$. Given any two models $\sigma_i$ and $\sigma_j$, $1\leq i,j\leq N$, let $d_{ij}$ denote their Hamming distance.
The following lemma  gives exact formulae for the expected value and  variance of $T_s$.

\begin{lemma}\label{cor:variance} Let $\rho=1-2\lambda$.
\begin{enumerate}
\item $\mu_s=E[T_s]= N 2^{-s}$;
\item $\var(T_s)= \mu_s + 4^{-s}\sum_{i=1}^N\sum_{j\neq i}    ( {1+\rho^{d_{ij}}}  )^s -
\mu^2_s$;
\item If $N \neq 0$, then $ \frac{\mathrm{var}(T_s)}{\mu^2} = \mu^{-1}_s   + N^{-2}
\sum_{i=1}^N\sum_{j\neq i}    ( {1+\rho^{d_{ij}}}  )^s -1$
\end{enumerate}
\end{lemma}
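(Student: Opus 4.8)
The plan is to compute $E[T_s]$ and $\operatorname{Var}(T_s)$ directly from the representation $T_s = \sum_{i=1}^N Y_{\sigma_i}$, using the moment formulae supplied by Lemma~\ref{lemma:basic} and the bilinearity of expectation. This is essentially a second-moment computation, so no deep idea is needed; the work is bookkeeping with the double sum.

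First I would establish part~(1). Since $T_s = \sum_{i=1}^N Y_{\sigma_i}$, linearity of expectation gives $\mu_s = E[T_s] = \sum_{i=1}^N E[Y_{\sigma_i}]$, and by Lemma~\ref{lemma:basic}(1) each $E[Y_{\sigma_i}] = 2^{-s}$, hence $\mu_s = N 2^{-s}$. (This is already recorded in Proposition~\ref{prop:variance}, so one could simply cite it.)

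Next, for part~(2), I would start from $\operatorname{Var}(T_s) = E[T_s^2] - \mu_s^2$ and expand the square:
\begin{eqnarray*}
E[T_s^2] & = & E\Bigl[\Bigl(\sum_{i=1}^N Y_{\sigma_i}\Bigr)^2\Bigr] \ = \ \sum_{i=1}^N E[Y_{\sigma_i}^2] + \sum_{i=1}^N\sum_{j\neq i} E[Y_{\sigma_i} Y_{\sigma_j}].
\end{eqnarray*}
Because $Y_{\sigma_i}$ is a Bernoulli (hence $0/1$-valued) random variable, $Y_{\sigma_i}^2 = Y_{\sigma_i}$, so the diagonal terms sum to $\sum_i E[Y_{\sigma_i}] = \mu_s$. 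For the off-diagonal terms, Lemma~\ref{lemma:basic}(2) gives $E[Y_{\sigma_i} Y_{\sigma_j}] = \bigl(\frac{1+\rho^{d_{ij}}}{4}\bigr)^s = 4^{-s}(1+\rho^{d_{ij}})^s$. Collecting these yields $E[T_s^2] = \mu_s + 4^{-s}\sum_{i=1}^N\sum_{j\neq i}(1+\rho^{d_{ij}})^s$, and subtracting $\mu_s^2$ gives exactly the claimed expression for $\operatorname{Var}(T_s)$.

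Finally, part~(3) follows by dividing the formula in part~(2) by $\mu_s^2 = N^2 4^{-s}$ (legitimate since $N\neq 0$ forces $\mu_s>0$): the term $\mu_s/\mu_s^2$ becomes $\mu_s^{-1}$, the factor $4^{-s}$ cancels against $\mu_s^2 = N^2 4^{-s}$ to leave $N^{-2}\sum_{i}\sum_{j\neq i}(1+\rho^{d_{ij}})^s$, and $\mu_s^2/\mu_s^2 = 1$. There is no real obstacle here; the only point requiring a small amount of care is the handling of the diagonal via the idempotence $Y_{\sigma_i}^2 = Y_{\sigma_i}$ and keeping track of which sums run over all pairs versus ordered pairs with $i\neq j$, together with the degenerate convention $\rho^{d}\defi 1$ when $\rho=d=0$ inherited from Lemma~\ref{lemma:basic}(2) so that the $i=j$ case is consistent should one wish to rewrite the double sum over all ordered pairs.
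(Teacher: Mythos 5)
Your proof is correct and follows essentially the same route as the paper's: linearity of expectation for part (1), the expansion $E[T_s^2]=\sum_i E[Y_{\sigma_i}]+\sum_i\sum_{j\neq i}E[Y_{\sigma_i}Y_{\sigma_j}]$ via the idempotence $Y_{\sigma_i}^2=Y_{\sigma_i}$ together with Lemma~\ref{lemma:basic}(2) for part (2), and division by $\mu_s^2=N^2 4^{-s}$ for part (3). No discrepancies to report.
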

\iffull
\begin{proof}
The first item is obvious from Lemma \ref{lemma:basic}(1), $T=\sum_{i=1}^N Y_{\sigma_i}$ and linearity of expectation.

Concerning the second item, recall that $\var(T)=E[T^2]-\mu^2$. Now, taking into account that $Y_\sigma\cdot Y_\sigma=Y_\sigma$, we have
\begin{eqnarray*}
T^2 & = & \left(\sum_{i=1}^N Y_{\sigma_i}\right)\left(\sum_{j=1}^N Y_{\sigma_j}\right)
\ =\ \sum_{i=1}^N Y_{\sigma_i}+\sum_{i=1}^N \sum_{j\neq i} Y_{\sigma_i}Y_{\sigma_j}\,.
\end{eqnarray*}
Applying expectation to both sides of the above equation, then exploiting linearity  and    Lemma \ref{lemma:basic}(2)
\begin{eqnarray*}
E[T^2] & = & \frac N{2^s}+\sum_{i=1}^N\sum_{j\neq i}\left(\frac{1+\rho^{d_{ij}}}4\right)^s\,.
\end{eqnarray*}
Since $\mu^2=(\frac N {2^s})^2$, the thesis for this part easily follows.

The third item is an immediate consequence of the first two.
\else
\begin{proof}[Sketch]
The first two items are a direct consequence of Lemma \ref{lemma:basic} and linearity of expectation;
the third item derives from the previous ones.
\fi
\qed
\end{proof}

Looking at the third item above, we clearly see that the upper bound on the error probability  we are after   depends much on `how sparse' the set of $\phi$'s models is in the Hamming space $\mathbb{F}_2^m$: the sparser, the greater the distance, the lower the value of the double summation, the better. Let us denote by $S$ the double summation in the third item of the above lemma:
\begin{eqnarray*}
S &\defi & \sum_{i=1}^N\sum_{j\neq i}    ( {1+\rho^{d_{ij}}} )^s
\end{eqnarray*}
In what follows, we will give an upper bound on $S$ which  is easy to compute and depends on the minimum Hamming distance $d$ among any two models of $\phi$. We need some notation about models of a formula.
\iffull
To this aim, we need a preliminary lemma.

\begin{lemma}\label{lemma:amp}
Let $\Sigma \subseteq \{0,1\}^m$ be such that the minimum Hamming distance between two
strings in $\Sigma$ is $d$. Fix $\sigma \in \Sigma$.  For every $\sigma' \in \Sigma$, let
$\Delta_{\sigma'}=\{i\in\{1,...,m\}:\sigma'(i)\neq \sigma(j)\}$
the set of positions where $\sigma'$ differs from $\sigma$.
For every $j=d,...,m$, let us define the set of integers
$$
L_j^\sigma(\Sigma) \defi \{\ |\Delta_{\sigma'} \cap \Delta_{\sigma''}| \ :\
\{\sigma',\sigma''\} \subseteq \Sigma\ \wedge\ |\Delta_{\sigma'}| = |\Delta_{\sigma''}| = j\ \}
$$
Then
$$
L_j^\sigma(\Sigma) = \left\{
\begin{array}{ll}
\{0,\ldots,j-\left\lceil \frac d 2\right\rceil\} & \mbox{ if } j\leq \frac m 2
\vspace*{.2cm}\\
\{2j-m,\ldots,j-\left\lceil \frac d 2\right\rceil\} & \mbox{otherwise.}
\end{array}
\right.
$$
\end{lemma}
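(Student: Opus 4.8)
The plan is to prove the two inclusions $\subseteq$ and $\supseteq$ separately. Throughout, fix $\sigma\in\Sigma$ and write, for two \emph{distinct} members $\sigma',\sigma''\in\Sigma$ with $|\Delta_{\sigma'}|=|\Delta_{\sigma''}|=j$, the quantity $\ell\defi|\Delta_{\sigma'}\cap\Delta_{\sigma''}|$. The first thing I would record is the elementary identity that the Hamming distance of $\sigma'$ and $\sigma''$ is the size of the symmetric difference of their difference-sets relative to $\sigma$: a coordinate $i$ has $\sigma'(i)\neq\sigma''(i)$ exactly when $i$ lies in precisely one of $\Delta_{\sigma'},\Delta_{\sigma''}$, so $d(\sigma',\sigma'')=|\Delta_{\sigma'}\,\triangle\,\Delta_{\sigma''}|=|\Delta_{\sigma'}|+|\Delta_{\sigma''}|-2\ell=2j-2\ell$.

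For the inclusion $L_j^\sigma(\Sigma)\subseteq\{\dots\}$: since, by hypothesis, every pair of distinct strings of $\Sigma$ is at distance at least $d$, we get $2j-2\ell\ge d$, hence $\ell\le j-d/2$, and as $\ell$ is an integer this sharpens to $\ell\le j-\lceil d/2\rceil$ (using $\lfloor j-d/2\rfloor=j-\lceil d/2\rceil$). In the other direction $\ell\ge 0$ trivially, and $|\Delta_{\sigma'}\cup\Delta_{\sigma''}|=2j-\ell\le m$ forces $\ell\ge 2j-m$. Thus $\ell\in\{\max(0,2j-m),\dots,j-\lceil d/2\rceil\}$, and distinguishing $j\le m/2$ (where $\max(0,2j-m)=0$) from $j>m/2$ (where $\max(0,2j-m)=2j-m$) gives exactly the two clauses of the statement; I would also note in passing that, using $j\ge d\ge\lceil d/2\rceil$, this interval is nonempty precisely in the expected range of $j$.

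For the reverse inclusion (the tightness part, which is why the statement can be written with $=$ rather than $\subseteq$) I would show every $\ell$ in the claimed interval is realised by a suitable $\Sigma$. Given such an $\ell$, choose two $j$-subsets $\Delta_{\sigma'},\Delta_{\sigma''}\subseteq\{1,\dots,m\}$ meeting in exactly $\ell$ positions — possible since $\ell\le j$ and $2j-\ell\le m$, the latter being exactly the bound $\ell\ge 2j-m$ — and let $\sigma',\sigma''$ be obtained from $\sigma$ by flipping those coordinates. Then $d(\sigma,\sigma')=d(\sigma,\sigma'')=j\ge d$ and $d(\sigma',\sigma'')=2(j-\ell)\ge 2\lceil d/2\rceil\ge d$, so $\{\sigma,\sigma',\sigma''\}$ has all pairwise distances at least $d$ and hence $\ell\in L_j^\sigma$ of this set; to meet the hypothesis that the minimum distance equal $d$, I would add one further string $\tau$ gotten from $\sigma$ by flipping $d$ coordinates, chosen to overlap $\Delta_{\sigma'}\cup\Delta_{\sigma''}$ as little as possible, so that $d(\tau,\sigma)=d$ while $d(\tau,\sigma'),d(\tau,\sigma'')\ge d$.

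The main obstacle I anticipate is this last adjustment: forcing the minimum distance to be \emph{exactly} $d$ rather than merely $\ge d$ interacts both with the parity of $d$ — when $d$ is odd the even number $2(j-\ell)$ can never itself equal $d$ — and with whether $\{1,\dots,m\}$ has enough coordinates outside $\Delta_{\sigma'}\cup\Delta_{\sigma''}$ to place $\Delta_\tau$. I would dispatch this by a short, if slightly tedious, case analysis on how many of the $d$ flipped coordinates of $\tau$ must be borrowed from $\Delta_{\sigma'}\cup\Delta_{\sigma''}$, keeping each of $|\Delta_\tau\cap\Delta_{\sigma'}|$ and $|\Delta_\tau\cap\Delta_{\sigma''}|$ at most $j/2$. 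I note that only the inclusion $\subseteq$ is needed for the subsequent upper bound on $S$, so the delicate direction is, in effect, just a tightness remark.
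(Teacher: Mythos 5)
Your proof of the containment $L_j^\sigma(\Sigma)\subseteq\{\max(0,2j-m),\dots,j-\lceil d/2\rceil\}$ is correct and follows the same route as the paper: the paper likewise derives the upper endpoint from the distance constraint ($d(\sigma',\sigma'')=2j-2\ell\ge d$, hence $\ell\le j-\lceil d/2\rceil$ by integrality) and the lower endpoint from $|\Delta_{\sigma'}\cup\Delta_{\sigma''}|=2j-\ell\le m$; if anything you are more explicit about the symmetric-difference identity and the rounding step. Where you diverge is the reverse inclusion. The paper does not actually prove it: it only remarks that the two endpoints are ``possible'', and indeed the equality as literally stated cannot hold for an arbitrary $\Sigma$ satisfying the hypotheses ($L_j^\sigma(\Sigma)$ may well be a proper subset of the interval, or empty, if $\Sigma$ happens to contain few strings at distance $j$ from $\sigma$). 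Your witness construction does not rescue the literal statement either, since it manufactures a \emph{new} set $\Sigma$ rather than working with the given one, and the adjustment by the extra string $\tau$ (needed to make the minimum distance exactly $d$) is left as a deferred case analysis. None of this is a real defect, and you correctly identify why: the only thing consumed downstream, in the Ray--Chaudhuri--Wilson step of the bound on $S$, is the cardinality estimate $|L_j^\sigma(\Sigma)|\le l_j$, which follows from the containment alone. So your proof establishes everything the paper actually uses; the honest fix would be to restate the lemma with $\subseteq$ in place of $=$, at which point your first paragraph is a complete proof and the tightness discussion can be dropped.
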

\begin{proof}
Let us first consider how many elements can at most $\Delta_{\sigma'}$ and $\Delta_{\sigma''}$ have in common.
Independently of the value of $j$, since $\sigma'$ and $\sigma''$ must differ for at least $d$ positions,
$\Delta_{\sigma'}$ and $\Delta_{\sigma''}$ can share at most $j-\left\lceil \frac d 2 \right\rceil$ elements.

Let us now consider the least number of elements that $\Delta_{\sigma'}$ and $\Delta_{\sigma''}$
can have in common.
We start by considering $j \leq \frac m 2$ (notice that, if $j = \frac m 2$, then $m$ must be even).
In this case, it is possible that $\Delta_{\sigma'} \cap \Delta_{\sigma''} = \emptyset$,
(e.g., by having all elements of $\Delta_{\sigma'}$ in the first $\left\lfloor \frac m 2 \right\rfloor$ bits
and all elements of $\Delta_{\sigma''}$ in the last $\left\lfloor \frac m 2 \right\rfloor$ bits).
Let $j > \frac m 2$. In this case, $\Delta_{\sigma'}$ and $\Delta_{\sigma''}$ must intersect (since $2j > m$)
and the minimum value for the size of their intersection is $2j-m$.
\qed
\end{proof}
\fi
Below, we let  $j=d,...,m$.
\begin{eqnarray}
l_j &\defi & \left\{\begin{array}{ll}j-\left\lceil \frac d 2\right\rceil +1 & \text{ if }j\leq \frac m 2
\vspace*{.2cm}\\
\max\{0,m-j-\left\lceil \frac d 2\right\rceil +1\} \quad& \text{ if }j> \frac m 2
\end{array}
\right.\label{eq:lj}
\nonumber\\
w^*  & \defi  & \min \left\{w:d\leq w\leq m\text{ and }\sum_{j=d}^w{m\choose {l_j}} \geq N-1 \right\}
\label{eq:wstar}\\
N^* & \defi & \sum_{j=d}^{w^*-1}{m\choose {l_j}}
\label{eq:Nstar}
\end{eqnarray}
where we stipulate that $\min\es=0$.  Note that the definitions of $w^*$ and $N^*$ depend  solely on $N,m$ and $d$.

\iffull
As another ingredient of the proof, we need the following lemma  from extremal set theory; for a proof, see \cite[Th.4.2]{Babai}.

\begin{lemma}[Ray-Chaudhuri-Wilson]\label{lemma:RCW} Let $\mathcal{F}$ be a family of subsets of $\{1,...,m\}$, all of the same size $k$. Let $L$ be a set of nonnegative integers such that $|L|\leq k$ and for every pair of distinct $A,B\in \mathcal{F}$, $|A\cap B|\in L$. Then  $|\mathcal{F}|\leq {m \choose  {|L|}}$.
\end{lemma}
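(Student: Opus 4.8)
The plan is to give the standard polynomial (linear-algebra) proof; this is the classical uniform Ray-Chaudhuri--Wilson theorem, also found in \cite{Babai}. First I would dispose of the trivial case $|\mathcal F|\le 1$ and then normalise $L$: replacing it by $L_0:=\{\,|A\cap B| : A,B\in\mathcal F,\ A\ne B\,\}$, note $L_0\subseteq L$ and, since two distinct $k$-subsets of $\{1,\dots,m\}$ meet in between $\max(0,2k-m)$ and $k-1$ elements, $L_0\subseteq\{0,\dots,k-1\}$ and $|L_0|\le\min(k,m-k)$; a short computation using $|L|\le k$ and unimodality of binomials gives $\binom m{|L_0|}\le\binom m{|L|}$, so I may assume $L=L_0$, in particular $0\le l\le k-1$ for all $l\in L$. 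Writing $s:=|L|$ and identifying each $A\in\mathcal F$ with its characteristic vector $v_A\in\{0,1\}^m\subseteq\mathbb R^m$, I would then introduce, for each $A\in\mathcal F$, the degree-$s$ polynomial
\[
f_A(x)\;:=\;\prod_{l\in L}\bigl(\langle v_A,x\rangle-l\bigr).
\]

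The first thing to check is linear independence. Since each $l\le k-1$, $f_A(v_A)=\prod_{l\in L}(k-l)\ne 0$, whereas $f_A(v_B)=\prod_{l\in L}(|A\cap B|-l)=0$ for $B\ne A$ because $|A\cap B|\in L$. Hence, under a suitable ordering of $\mathcal F$, the matrix $\bigl(f_A(v_B)\bigr)_{A,B\in\mathcal F}$ is triangular with nonzero diagonal, so the functions $\{f_A\}_{A\in\mathcal F}$ restricted to $P:=\{v_B:B\in\mathcal F\}$ are linearly independent. It then remains to produce a vector space of dimension $\binom m s$ containing all of $\{f_A|_P\}$.

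This last step is what I expect to be the main obstacle: the bare argument confines each $f_A$ only to the space of multilinear polynomials of degree $\le s$, of dimension $\sum_{i=0}^s\binom mi$, and getting down to the sharp $\binom m s$ is exactly where the uniformity $|A|=k$ must enter. The plan is: first reduce each $f_A$ modulo $x_i^2=x_i$ (valid on $\{0,1\}^m$, and the rewriting terminates since it strictly decreases $\sum_i\max(e_i-1,0)$ on each monomial), getting a multilinear $g_A$ of degree $\le s$ that agrees with $f_A$ on $\{0,1\}^m\supseteq P$; then observe that $P$ lies in the slice $\Sigma:=\{x\in\{0,1\}^m:\sum_i x_i=k\}$, on which $\sum_j x_j$ is the constant $k$, so that for $T\subseteq\{1,\dots,m\}$ with $|T|=t$, where $t<s\le k$, writing $x_T:=\prod_{i\in T}x_i$,
\[
k^{\,s-t}\,x_T\;=\;x_T\Bigl(\textstyle\sum_{j=1}^m x_j\Bigr)^{\!s-t}\qquad\text{on }\Sigma .
\]
Expanding the power and reducing each monomial modulo $x_i^2=x_i$ turns $x_T\,x_{j_1}\cdots x_{j_{s-t}}$ into the multilinear monomial on $T\cup\{j_1,\dots,j_{s-t}\}$, of degree between $t$ and $s$ (since $s\le k\le m$); the degree-$t$ term $x_T$ appears with coefficient $t^{\,s-t}<k^{\,s-t}$ (as $t<k$), so one solves for $x_T$ as a combination of multilinear monomials of degrees $t+1,\dots,s$ on $\Sigma$. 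Iterating upward in degree, every multilinear monomial of degree $<s$ is, on $\Sigma$, a combination of multilinear monomials of degree exactly $s$; hence each $g_A$ agrees on $P$ with a member of $\mathrm{span}\{x_T:|T|=s\}$, of dimension $\binom m s$. With the linear independence above, this forces $|\mathcal F|\le\binom m s=\binom m{|L_0|}\le\binom m{|L|}$.
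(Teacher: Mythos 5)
Your proposal is correct, but note that the paper does not actually prove this lemma: it is imported as a known result, with the proof deferred to \cite[Th.4.2]{Babai}. What you have written is essentially the standard polynomial-method proof of the uniform Ray--Chaudhuri--Wilson theorem (Alon--Babai--Suzuki style), and the two delicate points are handled properly. First, the normalization of $L$: since only intersection sizes in $\{\max(0,2k-m),\dots,k-1\}$ can occur, passing to $L_0\subseteq L$ both guarantees $f_A(v_A)=\prod_{l\in L_0}(k-l)\neq 0$ and gives $|L_0|\leq\min(k,m-k)\leq m/2$, after which your unimodality comparison $\binom{m}{|L_0|}\leq\binom{m}{|L|}$ indeed goes through (monotonicity if $|L|\leq m/2$, and otherwise $|L_0|\leq m-k\leq m-|L|\leq m/2$ plus symmetry). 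Second, the sharpening from the naive dimension $\sum_{i\leq s}\binom{m}{i}$ down to $\binom{m}{s}$, which is exactly where uniformity must be used: your identity $k^{s-t}x_T=x_T(\sum_j x_j)^{s-t}$ on the slice, with the offending coefficient $t^{s-t}<k^{s-t}$ (valid because $t<s\leq|L_0|\leq k$), correctly shows that every multilinear monomial of degree $<s$ restricted to the slice lies in the span of the degree-$s$ monomials, and since the points $v_B$ all lie on the slice this caps the dimension at $\binom{m}{s}$; combined with the (in fact diagonal, not merely triangular) evaluation matrix giving linear independence of the $f_A|_P$, the bound follows. So your argument is a legitimate self-contained replacement for the external citation; it buys the paper independence from \cite{Babai} at the cost of about a page, while the paper's choice simply treats the lemma as black-box input to Lemma \ref{lemma:S}.
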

\fi

With the above definitions and results, we have the following upper bound on $S$.

\begin{lemma}\label{lemma:S}
Let the minimal distance between any two models of $\phi$ be at least $d$.
Then
\begin{eqnarray}
\iffull \label{eq:SBound} \else \nonumber \fi
S \leq N \left(\sum_{j=d}^{w^*-1} {m\choose {l_j}}  ( {1+\rho^{j}} )^s + (N-1-N^*) ( {1+\rho^{w^*}} )^s\right)
\end{eqnarray}
\end{lemma}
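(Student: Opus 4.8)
The idea is to bound the double sum $S=\sum_{i=1}^N\sum_{j\neq i}(1+\rho^{d_{ij}})^s$ by a worst-case configuration of the $N$ models, using the minimum-distance constraint $d_{ij}\geq d$ together with the Ray--Chaudhuri--Wilson bound to control, for a \emph{fixed} reference model $\sigma_i$, how many other models can lie at each Hamming distance $j$ from it. Since $x\mapsto (1+\rho^x)^s$ is decreasing in $x$ (because $0\le\rho<1$), the sum over $j\neq i$ of $(1+\rho^{d_{ij}})^s$ is maximized when the other $N-1$ models are packed as close to $\sigma_i$ as the combinatorics allow: we should put as many models as possible at the smallest distance, then the next smallest, and so on. The role of $l_j$, $w^*$ and $N^*$ is exactly to quantify this greedy packing.

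Concretely, I would proceed as follows. First, fix a reference model $\sigma=\sigma_i$ and, for each other model $\sigma'$, let $\Delta_{\sigma'}$ be its difference set from $\sigma$, so $|\Delta_{\sigma'}|=d_{i}(\sigma')=:j$ ranges in $\{d,\dots,m\}$ (distance to $\sigma$ is at least $d$ by hypothesis, and at most $m$ trivially). Group the other models by the value $j=|\Delta_{\sigma'}|$. Within the group of models at distance exactly $j$ from $\sigma$, any two such $\sigma',\sigma''$ satisfy $|\Delta_{\sigma'}\cap\Delta_{\sigma''}|\in L_j^\sigma(\Sigma)$, and by Lemma~\ref{lemma:amp} this set has size at most $l_j$ (reading off the two cases $j\le m/2$ and $j>m/2$ in \eqref{eq:lj}). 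Applying Ray--Chaudhuri--Wilson (Lemma~\ref{lemma:RCW}) with $k=j$ and $|L|\le l_j$ gives that the number of models at distance exactly $j$ from $\sigma$ is at most $\binom{m}{l_j}$. Hence the vector $(n_d,n_{d+1},\dots,n_m)$ recording how many of the other $N-1$ models sit at each distance satisfies $0\le n_j\le\binom{m}{l_j}$ and $\sum_j n_j=N-1$.

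Next, I would observe that $\sum_{\sigma'\neq\sigma}(1+\rho^{d_i(\sigma')})^s=\sum_{j=d}^m n_j\,(1+\rho^j)^s$, and that, subject to the constraints just derived, this quantity is maximized by the greedy choice: set $n_j=\binom{m}{l_j}$ for $j=d,\dots,w^*-1$, put the remaining $N-1-N^*$ models at distance $w^*$, and $n_j=0$ for $j>w^*$. This is the standard exchange argument — since $(1+\rho^j)^s$ is non-increasing in $j$, moving mass to smaller $j$ never decreases the sum — and it uses precisely the definitions \eqref{eq:wstar} of $w^*$ (the first distance at which the cumulative capacity $\sum_{j=d}^{w}\binom{m}{l_j}$ reaches $N-1$) and \eqref{eq:Nstar} of $N^*$. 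Therefore, for every fixed $i$,
$$
\sum_{j\neq i}(1+\rho^{d_{ij}})^s\ \le\ \sum_{j=d}^{w^*-1}\binom{m}{l_j}(1+\rho^j)^s+(N-1-N^*)(1+\rho^{w^*})^s .
$$
Summing this bound over the $N$ choices of $i$ yields exactly the claimed inequality for $S$, since the right-hand side does not depend on $i$.

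**Main obstacle.** The routine parts are the monotonicity/exchange argument and the bookkeeping with the two cases in the definition of $l_j$. The step that carries the real content — and where care is needed — is the combined use of Lemma~\ref{lemma:amp} and Ray--Chaudhuri--Wilson to pass from "minimum distance $d$ among all models" to "at most $\binom{m}{l_j}$ models at distance $j$ from any fixed model". One must check that the hypothesis $|L|\le k$ of Lemma~\ref{lemma:RCW} is met (i.e. $l_j\le j$, which holds since $l_j\le j-\lceil d/2\rceil+1\le j$ as $d\ge1$), and that the edge cases — $w^*=0$ (when $\min\emptyset=0$, i.e. the capacity never suffices, which forces $N-1$ to be small relative to the capacities and should be handled by interpreting the bound conservatively), $\rho=0$, and $j$ at the boundary $m/2$ — do not break the argument. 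A further subtlety is that the bound obtained per reference model is the same for all $i$ only because we always bound against the \emph{maximal} admissible distance profile; one should note explicitly that this is legitimate even though no single global configuration of the $N$ models need realize this profile simultaneously from every vantage point — the inequality is proved pointwise in $i$ and then summed, so no such global realizability is required.
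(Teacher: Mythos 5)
Your proposal is correct and follows essentially the same route as the paper's proof: fix a reference model, group the remaining $N-1$ models by their distance $j$, bound the size of each distance class by $\binom{m}{l_j}$ via Lemma~\ref{lemma:amp} combined with the Ray--Chaudhuri--Wilson lemma, maximize the resulting sum by the greedy assignment encoded in $w^*$ and $N^*$, and sum over the $N$ reference models. Your additional remarks (checking $l_j\le j$, the pointwise-in-$i$ nature of the bound, and the edge cases) are sound and only make explicit what the paper leaves implicit.
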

\iffull
\begin{proof}
Fix one of the models of $\phi$ (say $\sigma_i$), and consider the sub-summation originated by it, $S_i\defi \sum_{j\neq i} \left(\frac{1+\rho^{d_{ij}}} 4\right)^s$. Let us group  the remaining $N-1$ models into disjoint  families, $\F_d,\F_{d+1},\ldots$,   of models that are at distance $d,d+1,...$, respectively, from $\sigma_i$.
Note that each of  the $N-1$ models gives rise to exactly  one term  in the summation $S_i$. Hence,
\begin{eqnarray}\label{eq:SiBound}
S_i & = & \sum_{j=d}^{m} |\F_j| \left(\frac{1+\rho^{j}}4\right)^s
\end{eqnarray}
We now want to upper-bound this sum.

To this aim, notice that, for every $\{\sigma',\sigma''\} \subseteq \F_j$. by definition $|\Delta_{\sigma'} \cap \Delta_{\sigma''}| \in L_j^{\sigma_i}(\Sigma)$.
Now, we can identify each model $\sigma$ with $\Delta_\sigma$, its symmetric difference from $\sigma_i$,
seen as a subset of $\{1,...,m\}$. Then, by definition, each set in $\F_j$ has cardinality $j$ and,
by Lemma \ref{lemma:amp} (where $\sigma_i$ plays the role of $\sigma$)
and by Lemma \ref{lemma:RCW}, we have  that $|\F_j|\leq { m \choose {l_j}}$.
Hence, upper-bounding \eqref{eq:SiBound}\footnote{
    Notice that we are not trying to upper bound every summand of \eqref{eq:SiBound}, but just the overall sum.
} consists, e.g., in choosing a tuple of integers $x_d, \ldots, x_m$
in such a way that function
\begin{eqnarray}\label{eq:target}
\sum_{j=d}^{m} x_j \left(\frac{1+\rho^{j}}4\right)^s
\end{eqnarray}
is greater than or equal to the r.h.s. of \eqref{eq:SiBound}, under the constraints
\begin{eqnarray}\label{eq:constraints}
0 \leq x_j \leq {m\choose {l_j}} & \mbox{for } j=d,\ldots,m \ \quad& \sum_{j=d}^{m} x_j = N-1
\end{eqnarray}

Since $0\leq \rho <1$, we obtain an upper-bound by maximizing the $x_j$'s for which $j$ is closer to $d$.
Hence, the optimal solution of the linear programming problem with \eqref{eq:target} as objective function and
\eqref{eq:constraints} as constraints is
\begin{eqnarray*}\label{eq:optimal}
x_j & =&
\left\{
\begin{array}{ll}
{m\choose {l_j}} & \mbox{ for } j=d,\ldots,w^*-1
\vspace*{.2cm}\\
N-1-N^*  & \mbox{ for } j = w^*
\vspace*{.2cm}\\
0 & \mbox{ for } j > w^*
\end{array}
\right.
\end{eqnarray*}
with the definitions of $w^*$ and $N^*$ given in \eqref{eq:wstar} and  \eqref{eq:Nstar}, respectively.


All together, these facts imply that
\begin{eqnarray*}
S_i & \leq & \sum_{j=d}^{w^*-1} {m\choose {l_j}} \left(\frac{1+\rho^{j}}4\right)^s + (N-1-N^*)\left(\frac{1+\rho^{w^*}}4\right)^s
\end{eqnarray*}
By summing the above inequality over all models $\sigma_i$, we obtain the thesis.
\else
\begin{proof}[Sketch]
Fix one of the models of $\phi$ (say $\sigma_i$), and consider the sub-summation originated by it, $S_i\defi \sum_{j\neq i} \left(\frac{1+\rho^{d_{ij}}} 4\right)^s$. Let us group  the remaining $N-1$ models into disjoint  
families, $\F_d,\F_{d+1},\ldots$,   of models that are at distance $d,d+1,...$, respectively, from $\sigma_i$.
Note that each of  the $N-1$ models gives rise to exactly  one term  in the summation $S_i$. Hence,
$S_i = \sum_{j=d}^{m} |\F_j| \left(\frac{1+\rho^{j}}4\right)^s$.
By the Ray-Chaudhuri-Wilson Lemma \cite[Th.4.2]{Babai}, $|\F_j|\leq { m \choose {l_j}}$.
Hence, upper-bounding $S_i$ consists, e.g., in choosing a tuple of integers $x_d, \ldots, x_m$ such that
$\sum_{j=d}^{m} x_j \left(\frac{1+\rho^{j}}4\right)^s \geq \sum_{j=d}^{m} |\F_j| \left(\frac{1+\rho^{j}}4\right)^s$, 
under the constraints $0 \leq x_j \leq {m\choose {l_j}}$, for $j=d,\ldots,m$, and $\sum_{j=d}^{m} x_j = N-1$.
An optimal solution is
\begin{eqnarray*}\label{eq:optimal}
x_j & =&
\left\{
\begin{array}{ll}
{m\choose {l_j}} & \mbox{ for } j=d,\ldots,w^*-1
\vspace*{.2cm}\\
N-1-N^*  & \mbox{ for } j = w^*
\vspace*{.2cm}\\
0 & \mbox{ for } j > w^*
\end{array}
\right.
\end{eqnarray*}
The thesis is obtained by summing over all models $\sigma_i$.
\fi
\qed
\end{proof}

\begin{definition}\label{def:bound}
Given $s\geq 1$, $\beta>0$, $d\geq 1$ and $\lambda\in \left(0,\frac12\right]$, let us define
{
\begin{equation} \nonumber
\begin{array}{lrr}
B(s,m,\beta,d,\lambda)\defi & \multicolumn{2}{r}{2^{-\beta} \!\!+\!2^{-s-\beta}\!\left(\sum_{j=d}^{w^*-1}\!\! {m\choose {l_j}}  ( {1+\rho^{j}} )^s  \!+\!  (N \!-\! 1 \!-\! N^*)\! ( {1+\rho^{w^*}} )^s\!\right) \!-\! 1,}
\end{array}
\end{equation}
}
\!\!where $\rho \defi 1 - 2\lambda$ and $N=\lceil 2^{s+\beta}\rceil $ also in the definition of $w^*$ and $N^*$.\end{definition}

Using the facts collected so far,  the following theorem follows,
giving an upper bound on $\frac {\var(T_s)}{\mu^2_s}$.

\begin{theorem}[upper bound]\label{th:mainBound}
Let   the minimal distance between  models of $\phi$ be at least $ d$ and $\#\phi > 2^{s+\beta}$. Then,
$\frac{\var(T_s)}{\mu^2_s} \leq   B(s,m,\beta,d,\lambda)$.
\end{theorem}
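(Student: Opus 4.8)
The plan is to reduce the statement, via the exact variance formula of Lemma~\ref{cor:variance} and the combinatorial estimate of Lemma~\ref{lemma:S}, to a single monotonicity property of a one‑variable function. Write $N := \#\phi$; the hypothesis $\#\phi > 2^{s+\beta}$ forces the integer $N$ to satisfy $N \geq \lceil 2^{s+\beta}\rceil =: N'$. For a positive integer $n$ let $\Phi(n)$ denote the quantity $\sum_{j=d}^{w^*-1}\binom{m}{l_j}(1+\rho^j)^s + (n-1-N^*)(1+\rho^{w^*})^s$ read as a function of $n$, with $w^*$ and $N^*$ defined as in \eqref{eq:wstar}--\eqref{eq:Nstar} but with $N$ replaced by $n$; thus $B(s,m,\beta,d,\lambda) = 2^{-\beta} + 2^{-s-\beta}\,\Phi(N') - 1$ by Definition~\ref{def:bound}, and $S \leq N\cdot\Phi(N)$ by Lemma~\ref{lemma:S}. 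Substituting the latter, together with $\mu_s = N2^{-s}$, into Lemma~\ref{cor:variance}(3) gives
\[
\frac{\var(T_s)}{\mu_s^2} \;=\; \frac{2^s}{N} + \frac{S}{N^2} - 1 \;\leq\; \frac{2^s + \Phi(N)}{N} - 1 \;=\; g(N) - 1, \qquad g(n) \defi \frac{2^s + \Phi(n)}{n}.
\]

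The core step is to prove that $g$ is non‑increasing on the positive integers. Here I would exploit the description, already extracted in the proof of Lemma~\ref{lemma:S}, of $\Phi(n)$ as the optimum of the fractional‑knapsack linear program ``maximise $\sum_{j=d}^m x_j(1+\rho^j)^s$ subject to $0 \leq x_j \leq \binom{m}{l_j}$ for all $j$ and $\sum_{j=d}^m x_j = n - 1$''. Viewing $\Phi$ as a function of the budget $n-1$, the greedy solution fills the bins $j = d, d+1, \dots$ in order of non‑increasing unit value $(1+\rho^j)^s$ (equivalently, increasing $j$, as $\rho \in [0,1)$), so $\Phi(n)$ is the prefix sum of a non‑increasing, non‑negative sequence of marginal values, each bounded by the first, $(1+\rho^d)^s \leq 2^s$. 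Hence $\Phi$ is non‑negative and concave on the integers, $\Phi(1) = 0$, and $\Phi(n+1) - \Phi(n) \leq 2^s$ for all $n$. Concavity together with $\Phi(1) = 0$ gives $\Phi(n) = \sum_{k=1}^{n-1}\bigl(\Phi(k+1) - \Phi(k)\bigr) \geq (n-1)\bigl(\Phi(n+1) - \Phi(n)\bigr)$, since each of the $n-1$ increments being summed is at least the last one; combined with $\Phi(n+1) - \Phi(n) \leq 2^s$ this yields $n\bigl(\Phi(n+1) - \Phi(n)\bigr) \leq 2^s + \Phi(n)$, which is exactly $g(n) \geq g(n+1)$. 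I would also remark that the LP is feasible at the budgets $N-1$ and $N'-1$ in play, because the minimum‑distance hypothesis forces $\#\phi - 1 \leq \sum_{j=d}^m\binom{m}{l_j}$ (the counting bound implicit in Lemma~\ref{lemma:S}), so $\Phi$ is well defined at both points.

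Given monotonicity, $N \geq N'$ yields $g(N) \leq g(N')$, and since $N' \geq 2^{s+\beta} > 0$ while $2^s + \Phi(N') > 0$ we have $g(N') = (2^s + \Phi(N'))/N' \leq (2^s + \Phi(N'))/2^{s+\beta} = 2^{-\beta} + 2^{-s-\beta}\Phi(N') = B(s,m,\beta,d,\lambda) + 1$; chaining, $\var(T_s)/\mu_s^2 \leq g(N) - 1 \leq g(N') - 1 \leq B(s,m,\beta,d,\lambda)$. I expect the main obstacle to be identifying that $(2^s + \Phi(n))/n$ — not $\Phi(n)/n$, which is also non‑increasing but too weak on its own — is the quantity to track, and seeing why the constant $2^s$ is exactly right: it is simultaneously $N\mu_s^{-1}$, the contribution of the expectation, and an upper bound on every marginal increment of $\Phi$, and these two roles are precisely what the telescoping argument requires. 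A lesser subtlety is that $B$ is defined through $N' = \lceil 2^{s+\beta}\rceil$ rather than through $\#\phi$, so the two ``$-1$'' normalisations differ; the slack $N' \geq 2^{s+\beta}$ absorbs the discrepancy in the last displayed inequality.
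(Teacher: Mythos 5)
Your proof is correct, and it shares the paper's skeleton — Lemma~\ref{cor:variance}(3) to express $\var(T_s)/\mu_s^2$, Lemma~\ref{lemma:S} to bound $S$, and the inequality $\lceil 2^{s+\beta}\rceil \geq 2^{s+\beta}$ at the end — but it handles the passage from an arbitrary model count $N=\#\phi$ down to $N'=\lceil 2^{s+\beta}\rceil$ in a genuinely different way. The paper disposes of this step by a ``without loss of generality'' reduction: it adds conjuncts to $\phi$ so that the resulting $\phi'$ has exactly $\lceil 2^{s+\beta}\rceil$ models, proves the bound for $\phi'$, and argues that $\Pr(A(s,\phi)=0)\leq \Pr(A(s,\phi')=0)$. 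Note that this reduction transfers the \emph{error-probability} conclusion (which is what is ultimately needed in Corollary~\ref{cor:feas}) rather than the variance inequality itself, so the theorem as literally stated — a bound on $\var(T_s)/\mu_s^2$ for the original $\phi$ with arbitrary $\#\phi>2^{s+\beta}$ — is only obtained indirectly there. Your argument closes exactly that gap: you show that the upper bound $g(n)-1$ with $g(n)=(2^s+\Phi(n))/n$ is non-increasing in $n$, by exploiting the greedy/knapsack structure of $\Phi$ (non-increasing, non-negative marginal values each at most $(1+\rho^d)^s\leq 2^s$, hence concavity and $\Phi(n)\geq (n-1)(\Phi(n+1)-\Phi(n))$), so that evaluating at $n=N'$ dominates the bound at every admissible $N$. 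The telescoping identity $n(\Phi(n+1)-\Phi(n))\leq 2^s+\Phi(n)$ is checked correctly and is precisely equivalent to $g(n+1)\leq g(n)$; your feasibility remark ($N-1\leq \sum_{j=d}^m\binom{m}{l_j}$, a consequence of the Ray--Chaudhuri--Wilson count underlying Lemma~\ref{lemma:S}) covers the edge case where $w^*$ would otherwise default to $0$. In short: what the paper buys with a one-line (and slightly informal) model-restriction argument, you buy with an explicit monotonicity lemma that is more work but yields the variance inequality exactly as stated, uniformly in $\#\phi$.
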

\begin{proof}
First note that we can assume without loss of generality that $\#\phi=N=\lceil 2^{s+\beta}\rceil$.
If this was not the case, we can consider in what follows
any formula $\phi'$ whose models are models of $\phi$
but are exactly $\lceil 2^{s+\beta}\rceil$ ($\phi'$ can be obtained by adding some conjuncts to $\phi$
that exclude $\#\phi - \lceil 2^{s+\beta}\rceil$ models). Then, $\Pr\left(A(s,\phi)=0 \right) \leq
\Pr\left(A(s,\phi')=0 \right)$ and this would suffice, for the purpose of upper-bounding
$\Pr\left(A(s,\phi)=0 \right) $.
\iffull

Then, by Lemma \ref{cor:variance}(3) (first step), by Lemma \ref{lemma:S} (second step)
and by the fact that $N = \lceil 2^{s+\beta}\rceil \geq 2^{s+\beta}$ (fourth step), we have :
\begin{eqnarray*}
\frac{var(T)}{\mu^2} & = &
\frac{2^s}{N} + \left(\frac{2^s}{N}\right)^2 S -1
\\
& \leq &
\frac{2^s}{N} + \left(\frac{2^s}{N}\right)^2 N \left(\sum_{j=d}^{w^*-1} {m\choose {l_j}} \left(\frac{1+\rho^{j}}4\right)^s \right.\\
&& \hspace*{1.7cm} \left. +\ (N-1-N^*)\left(\frac{1+\rho^{w^*}}4\right)^s\right) - 1
\\
& = &
\frac{2^s}{N} + \frac{2^{2s}}{N} \left(\sum_{j=d}^{w^*-1} {m\choose {l_j}} \left(\frac{1+\rho^{j}}4\right)^s
 +\  (N-1-N^*)\left(\frac{1+\rho^{w^*}}4\right)^s\right) - 1
\\
& \leq &
\frac{2^s}{2^{s+\beta}} + \frac{2^{2s}}{2^{s+\beta}} \left(\sum_{j=d}^{w^*-1} {m\choose {l_j}} \left(\frac{1+\rho^{j}}4\right)^s
\right.\\
&& \hspace*{1cm} \left. +\  (\lceil 2^{s+\beta}\rceil-1-N^*)\left(\frac{1+\rho^{w^*}}4\right)^s\right) - 1
\\
& = &
B(s,m,\beta,d,\lambda)
\end{eqnarray*}
The result follows from Proposition \ref{prop:variance}(1).
\else
The result follows from Proposition \ref{prop:variance}(1), 
Lemma \ref{cor:variance}(3), Lemma \ref{lemma:S}
and by the fact that $N = \lceil 2^{s+\beta}\rceil \geq 2^{s+\beta}$.
\fi
\qed
\end{proof}

The following notation will be useful in the rest of the paper. For
$0\leq \gamma\leq 1$, define
\begin{eqnarray}\label{eq:lambdaStar}
\lambda^*_\gamma(s,m,\beta,d)  \defi  \inf \left\{\lambda\in  \left(0,
 \frac 1 2  \right]\,:\,B(s,m,\beta,d,\lambda)\leq \gamma \right\}
\end{eqnarray}
where we stipulate $\inf \es =+\infty$.

\begin{corollary}[Feasibility]\label{cor:feas}
Assume  the minimal distance between any two models of $\phi$ is at least $d$.
 Then every   $\lambda\in \left(\lambda^*_1(s,m,\beta,d) , \frac 1 2 \right]$ is $(\phi,s,\beta)$-feasible.
\end{corollary}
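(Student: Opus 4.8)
The plan is to reduce the statement to Theorem~\ref{th:mainBound}: it suffices to show that $B(s,m,\beta,d,\lambda)<1$ for every $\lambda\in\left(\lambda^*_1(s,m,\beta,d),\frac12\right]$, since $(\phi,s,\beta)$-feasibility (Definition~\ref{def:feas}) then follows at once. The whole argument hinges on one elementary fact about the bound $B$ of Definition~\ref{def:bound}, which I would establish first.

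First I would check that $\lambda\mapsto B(s,m,\beta,d,\lambda)$ is continuous and \emph{strictly} decreasing on $\left(0,\frac12\right]$. The point is that $\lambda$ enters $B$ only through $\rho=1-2\lambda$: by Definition~\ref{def:bound} the quantities $N=\lceil 2^{s+\beta}\rceil$, $w^*$, $N^*$ and the integers $l_j$ do not depend on $\lambda$, and one has $1\le d\le w^*\le m$ (the set defining $w^*$ in \eqref{eq:wstar} is non-empty: grouping the other $N-1$ models of a set of $N$ pairwise $d$-separated ones by their distance to a fixed model gives $N-1\le\sum_{j=d}^m{m\choose{l_j}}$, exactly as in the proof of Lemma~\ref{lemma:S}). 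On $\left(0,\frac12\right]$ the map $\lambda\mapsto\rho$ is a strictly decreasing affine bijection onto $[0,1)$, and for each integer $j\ge 1$ the polynomial $\rho\mapsto(1+\rho^j)^s$ is continuous and strictly increasing on $[0,1)$; composing, each term $(1+\rho^j)^s$ occurring in $B$ — these are precisely the ones with $j=d,\dots,w^*$ — is a continuous, strictly decreasing function of $\lambda$. Their coefficients ${m\choose{l_d}},\dots,{m\choose{l_{w^*-1}}}$ and $N-1-N^*$ are nonnegative and sum to $N-1\ge 1$ (as $N=\lceil 2^{s+\beta}\rceil\ge 2$), so at least one is positive; hence $B$ is a nonnegative combination of continuous non-increasing functions of $\lambda$, at least one of which is strictly decreasing, and the claim follows.

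Next I would use this to pin down the sub-level set. Put $\lambda^*:=\lambda^*_1(s,m,\beta,d)=\inf\left\{\lambda\in\left(0,\frac12\right]:B(s,m,\beta,d,\lambda)\le 1\right\}$. If this set is empty then $\lambda^*=+\infty$, the interval $\left(\lambda^*,\frac12\right]$ is empty, and there is nothing to prove. Otherwise fix $\lambda\in\left(\lambda^*,\frac12\right]$: since $\lambda>\lambda^*$, $\lambda$ is not a lower bound for the set, so it contains some $\lambda_0<\lambda\le\frac12$ with $B(s,m,\beta,d,\lambda_0)\le 1$, and strict monotonicity gives $B(s,m,\beta,d,\lambda)<B(s,m,\beta,d,\lambda_0)\le 1$. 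Finally, assuming $\#\phi>2^{s+\beta}$: by the distance hypothesis, Theorem~\ref{th:mainBound} gives $\var(T_s)/\mu^2_s\le B(s,m,\beta,d,\lambda)<1$, and $\mu_s=\#\phi\cdot 2^{-s}>0$, so $\var(T_s)<\mu^2_s$, which is exactly what Definition~\ref{def:feas} asks for.

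The one genuinely load-bearing point is the \emph{strictness} of the decrease: a merely non-increasing $B$ would give only $\var(T_s)\le\mu^2_s$ at the infimum, not the strict inequality feasibility demands. Strictness, however, is immediate, since after the substitution $\rho=1-2\lambda$ the function $B$ is a polynomial in $\rho$ with nonnegative coefficients whose positive part involves only monomials of degree $\ge d\ge1$. Everything else is routine.
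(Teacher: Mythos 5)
Your proof is correct and follows essentially the same route as the paper's: show that $B(s,m,\beta,d,\cdot)$ is decreasing in $\lambda$, deduce $B(s,m,\beta,d,\lambda)<1$ for $\lambda$ strictly above the infimum $\lambda^*_1$, and conclude via Theorem~\ref{th:mainBound} and Proposition/Definition~\ref{def:feas}. The only difference is that you explicitly verify the \emph{strict} monotonicity of $B$ (and the well-definedness of $w^*$), a point the paper's proof takes for granted but which, as you rightly observe, is needed to turn $B\leq 1$ at some $\lambda_0<\lambda$ into the strict inequality $\var(T_s)<\mu_s^2$ that feasibility requires.
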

\iffull
\begin{proof}
Since $B(s,m,\beta,d,\lambda)$ is a decreasing function of $\lambda$, whenever it exists the value $\lambda$ s.t. $B(s,m,\beta,d,\lambda)=1$ it is unique. If $\lambda^* \neq +\infty$,
since $B(s,m,\beta,d,\lambda)$ is an upper
bound of $\frac {\var(T)}{\mu^2}$, any value $\lambda$ such that $\frac 12\geq \lambda >\lambda^*(s,m,\beta,d)$ is feasible and therefore can be used to make algorithm $A$ work (see Theorem \ref{th:mainBound}).
If $\lambda^*=+\infty$, we note that $\lambda=\frac 12$ is feasible:
\begin{eqnarray*}
\frac{\var(T)}{\mu^2} \ \leq \ \frac{2^s}{N} + \frac{2^{2s}}{N^2} \cdot N \cdot \frac{N^* + N - 1 - N^*}{4^s} -1
\ < \ \frac{2^s}{N} \ \leq \ 1
\end{eqnarray*}
where the first step comes from Lemma \ref{cor:variance}(3) and \eqref{eq:SBound} (with $\lambda = \frac 12$) and the third one holds by hypothesis.
\qed
\end{proof}
\fi

\section{Evaluation}\label{sec:four}

\subsection{Theoretical bounds on expected constraint length}

To assess the improvements that our theory introduces on XOR-based
approximate counting, we start by considering the impact of minimum
Hamming distance on the expected length of the XOR constraints.
First, in Fig. \ref{fig:plot} we plot  $\lambda^*_1$ as a
function of $s$, for fixed values of $m=32$ and $64$, $\beta=1.5$,
and four different values of $d$. Note that the difference between
different values of $d$ tends to vanish as $s$ gets large -- i.e.
close to $m$.

\begin{figure}[t]
\begin{center}
        \includegraphics[width=0.49\textwidth]{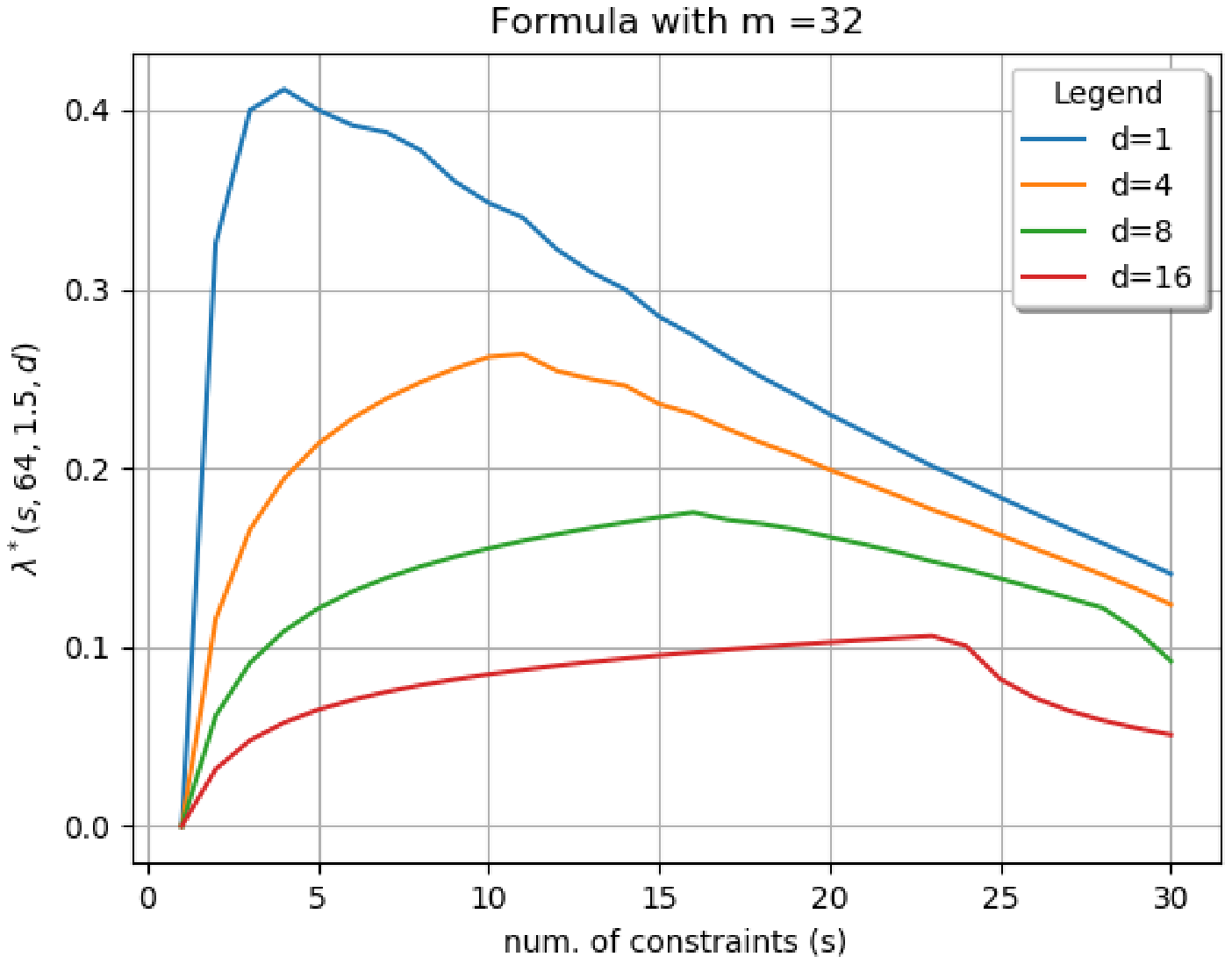}
        \includegraphics[width=0.49\textwidth]{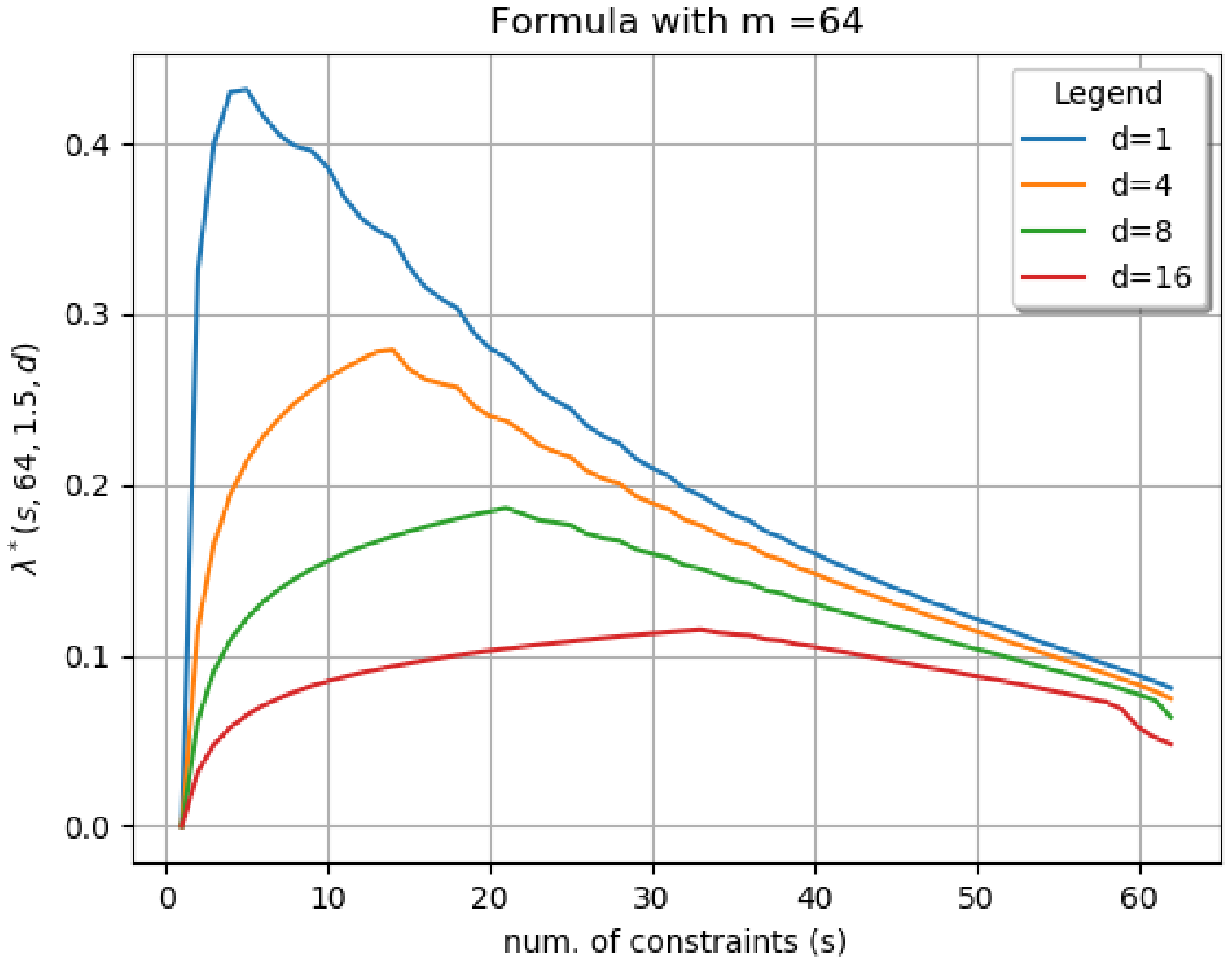}
        \vspace*{-.4cm}
\end{center}
        \caption{Plots of $\lambda^*_1$ as a
        function of $s$, for  $m=32$ and $m=64$, $\beta=1.5$,
        and different values of $d$. For any value of $s$ and $d$, any value of $\lambda$
        above the curve is feasible.} \label{fig:plot}
\end{figure}

Next, we compare our theoretical bounds with those in  \cite{EGSS14},
where a goal similar to ours is pursued. Interestingly,
their  bounds  coincide with ours when setting
$d=1$ --  no assumption on the minimum Hamming distance --
  showing that our approach generalizes theirs.
We report a numerical comparison in Table
\ref{tab:compare}, where   several hypothetical values of
$m$ (no. of variables) and $s$ (no. of constraints) 
 are considered.
 Following a similar evaluation conducted  in \cite[Tab.1]{EGSS14}, here we fix
 the error probability to $\delta=\frac 4 9$ and  the upper slack parameter to $\beta=2$,
  and report the values of
 $\lambda \times m$ for the minimal value of $\lambda$
 that would guarantee  a confidence of at least $1- \delta$ in case an upper bound is found,  computed with their approach and ours.
Specifically, in their case
   $\lambda$ is obtained via the formulae in   \cite[Cor.1,Th.3]{EGSS14}, while in our case
     $\lambda = \lambda^*_{\gamma}$ for $\gamma=0.8$, which entails the
   wanted confidence according to Proposition \ref{prop:variance}(2).
We see that, under the assumption that     lower bounds on $d$  as illustrated  are known,
in some cases a dramatic reduction of the expected length of the XOR constraints  is obtained.

\begin{table*}[t]
\begin{center}
\begin{tabular}{|rr|r|r|r|r|r|}
\hline
\quad {\bf\em N. Vars}  \qquad & {\bf\em N. Constraints} & {\quad   $ {\lambda \times m}$    } & \multicolumn{3}{c}{
   $ {\lambda \times m}$ {\bf\em present paper}} \vline\\
{\bf\em (m)} & {\bf\em (s)} &  \quad {\bf\em from \cite{EGSS14}} & {\quad   $d=1$}  &  {\quad   $d=5$}  & {   $d=20$}\\
\hline
 50  & 13 & 16.85 & 16.85 & 11.76 & 3.88\\
 50  & 16 & 15.38 & 15.38 & 11.36 & 4.1\\
 50  & 20 & 13.26 & 13.26 & 10.26 & 4.37\\
 50  & 30 & 9.57 & 9.57 & 8.02 & 4.75\\
 50  & 39 & 7.08 & 7.08 & 6.2 & 4.45\\
 100  & 11 & 39.05 & 39.05 & 23.65 & 7.4\\
 100  & 15 & 35.44 & 35.44 & 25.26 & 8.07\\
 100  & 25 & 27.09 & 27.09 & 21.33 & 9.14\\
 119  & 7 & 50.19 & 50.18 & 25.07 & 7.6\\
 136  & 9 & 55.63 & 55.63 & 30.66 & 9.46\\
 149  & 11 & 60.6 & 60.6 & 35.24 & 11.02\\
 352  & 10 & 147.99 & 147.99 & 81.42 & 25.31\\
\hline
\end{tabular}
\end{center}
\caption{Comparison with the provable bounds   from \cite[Tab. 1]{EGSS14}.}
\vspace*{-.4cm}
\label{tab:compare}
\end{table*}

\ifmai
\begin{table*}[t]
\begin{center}
\begin{tabular}{|c|ccc|c|ccc|}
\hline
{\bf\em Formula} & {\bf\em Vars}\ \ & {\bf\em log2(Models)}\ \ & {\bf\em XORs} & {\bf\em Provable Bound} & \multicolumn{3}{c}{\bf\em Our approach with} \vline\\
&&&&  {\bf\em from \cite{EGSS14}}& {\bf\em d=1}\ \ \ \ &  {\bf\em d=5}\ \ & {\bf\em d=20}\\
\hline
{\tt ls7R34med} & 119 & 10 & 7 & 50.19 & 50.18 & 25.07 & 7.6\\
{\tt ls7R35med} & 136 & 12 & 9 & 55.63 & 55.63 & 30.66 & 9.46\\
{\tt ls7R36med} & 149 & 14 & 11 & 60.6 & 60.6 & 35.24 & 11.02\\
\hline
{\tt log.c.red} & 352 & 19 & 10 & 147.99 & 147.99 & 81.42 & 25.31\\
\hline
{\tt wff-3-100-330} & 100 & 32 & 25 & 27.09 & 27.09 & 21.33 & 9.14\\
{\tt wff-3-100-380} & 100 & 22 & 15 & 35.44 & 35.44 & 25.26 & 8.07\\
{\tt wff-3-100-396} & 100 & 18 & 11 & 39.05 & 39.05 & 23.65 & 7.4\\
\hline
{\tt string-50-30} & 50 & 30 & 20 & 13.26 & 13.26 & 10.26 & 4.37\\
{\tt string-50-40} & 50 & 40 & 30 & 9.57 & 9.57 & 8.02 & 4.75\\
{\tt string-50-49} & 50 & 49 & 39 & 7.08 & 7.08 & 6.2 & 4.45\\
\hline
{\tt blk-50-3-10-20} & 50 & 23 & 13 & 16.85 & 16.85 & 11.76 & 3.88\\
{\tt blk-50-6-5-20} & 50 & 26 & 16 & 15.38 & 15.38 & 11.36 & 4.1\\
{\tt blk-50-10-3-20} & 50 & 30 & 20 & 13.26 & 13.26 & 10.26 & 4.37\\
\hline
\end{tabular}
\end{center}
\caption{Comparison on the provable bounds for the formulae taken from \cite{EGSS14}}
\vspace*{-.4cm}
\label{tab:compare}
\end{table*}
\fi

\subsection{Execution times}
Although the focus of the present paper is mostly theoretical, it is
instructive to look at the results of some simple experiments for a
first concrete assessment of the proposed methodology. To this aim, we
have implemented in Python the algorithm $C_A$ with $A$ as described
in Section \ref{sec:three},  relying on   CryptoMiniSAT  \cite{CMS}
as a SAT solver, and conducted  a few experiments\footnote{Run on a
MacBook Air, with a 1,7 GHz Intel Core i7 processor, 8 GB of memory
(1600 MHz DDR3) and OS X 10.9.5.}.

The crucial issue to use Theorem \ref{th:mainBound} is the knowledge
of  (a lower bound on) the minimal distance $d$ among the models of
the formula we are inputting to our algorithm. In general, this
information is unknown and we shall discuss a possible approach to
face this problem in the next section. For the moment, we use a few
 formulae describing the set of codewords of certain error correcting codes, for which the number of models and the
minimum distance   is known by construction. \iffull

More precisely, we consider BCH codes \cite{BCH1,BCH2},  that are
very well-known error-correcting codes whose minimal distance among
the codewords is lower-bounded by construction. The property of such
codes we shall use is the following:
\begin{quote}
For every $q \geq 3$ and $t < 2^q-1$, there exists a BCH code such
that codewords have $n = 2^q-1$ bits, including $k \geq n-qt$
parity-check bits, and their minimum distance is $d_{min} \geq 2t +
1$.
\end{quote}
From the theory of error-correcting codes, we know that every binary
linear $[n,m,d]$ block code $\Code$ (i.e. a code with codeword
length $n$, message length $m$ and minimum distance $d$)  can be
defined by its parity check matrix $H$, i.e. a $m \times n$ (with
$m\defi n-k$) binary matrix such that $u\in \mathbb{F}_2^n$ is a
codeword of $\Code$ if and only if
\begin{eqnarray}\label{eq:checkmatr}
Hu^T= \underline 0
\end{eqnarray}
where we use the vector-matrix multiplication in the field
$\mathbb{F}_2$ and $\underline 0$ denotes the $m$-ary zero vector.
Let $\varphi_H(u)$ be the propositional formula encoding that $u$ is
a codeword of $\Code$:
starting from  \eqref{eq:checkmatr},  $\varphi_H(u)$ can be
expressed as a conjunction of   $m$   XOR constraints    (assuming
$H$ has full rank):
$$
c_i \defi \left( \bigoplus_{j=1}^n H_{ij} \cdot u_j \ \ =\ 0 \right
) \quad \mbox{ for } i = 1,\ldots,m
$$
Thus, once fixed $n$, $m$ and $t$, we can calculate the
corresponding matrix $H$ and the associated formula $\varphi_H$; the
latter will have $2^m$ models (every sequence in $\mathbb{F}_2^m$
will yield a different model for $\varphi_H$), with Hamming distance
$d_{min}$.

In this way, we have created the following CNF formulae:

\begin{center}
\begin{tabular}{c|ccc|cc}
{\bf\em Name of the formula} & {\bf\em n} &  {\bf\em m} &  {\bf\em t} &  {\bf\em d}$_{min}$ & {\bf\em num vars}\\
\hline F16-31-7.cnf & 31 & 16 & 3 & 7 & 1221
\\
F21-31-5.cnf & 31 & 21 & 2 & 5 & 1591
\\
F26-31-5.cnf & 31 & 26 & 1 & 3 & 1091
\\
F24-63-15.cnf & 63 & 24 & 7 & 15 & 4862
\\
\end{tabular}
\end{center}

\noindent \else Each of such formulae derives from a specific BCH
code \cite{BCH2,BCH1} (these are very well-known error-correcting
codes whose minimal distance among the codewords is lower-bounded by
construction). In particular, {\tt Fxx-yy-zz.cnf} is a formula in
CNF describing membership to the BCH code for $2^{\tt xx}$ messages
(the number of models), codified via codewords of {\tt yy} bits and
with a distance that is at least {\tt zz}. \fi

For these formulae, we run 3 kinds of experiments.
First, we run the tool for every formula without using the improvements of the bounds
given by knowing the minimum distance (i.e., we used Theorem  \ref{th:mainBound} by setting $d=1$).
Second, we run the tool for every formula by using the known minimum distance.
Third, we run the state-of-the-art tool for approximate model counting,
called ApproxMC3 \cite{SM19} (an improved version of ApproxMC2
\cite{CMV16}). The results obtained are reported in Table
\ref{table:results}.

\begin{table*}[t]
\begin{center}
\begin{tabular}{|c|lll|lll|}
\hline
{\bf\em Formula} & {\bf\em Our tool with d=1} \qquad
&  {\bf\em Our tool with d=d}$_{min}$  &  {\bf\em ApproxMC3}\\
\hline
F21-31-5.cnf & res: $[2^{19.5},2^{23.5}]$ & res: $[2^{19.5},2^{23.5}]$ & res: ??\\
& time: 8.05 secs & time: 6.73 secs & time: $>3$ hours\\
\hline
F16-31-7.cnf & res: $[2^{14.5},2^{18.5}]$ & res: $[2^{14.5},2^{18.5}]$ & res: $[2^{13.86},2^{17.85}]$\\
& time: 11.75 secs & time: 9.5 secs & time: 22 mins 43 secs\\  
\hline
F11-31-9.cnf &  res: $[2^{9.5},2^{13.5}]$ & res: $[2^{9.5},2^{13.5}]$ & res: $[2^{9.09},2^{13.09}]$\\
& time: 6.75 secs & time: 4.32 secs & time:  15.24 secs\\
\hline
F6-31-11.cnf & res: $[2^{4.5},2^{8.5}]$ & res: $[2^{4.5},2^{8.5}]$ & res: $[2^4,2^8]$\\
& time: 3.01 secs & time: 2.62 secs & time: 1.9 secs\\  
\hline
F16-63-23.cnf &  res: $[2^{14.5},2^{18.5}]$ & res: $[2^{14.5},2^{18.5}]$ & res: $[2^{13.9},2^{17.9}]$
\\
& time: 31 mins 15 secs \qquad \quad& time: 2 min 36 secs \qquad \qquad& time: 54 mins 57 secs
\\
\hline
\end{tabular}
\end{center}
\caption{Results for our tool with $\alpha=\beta=1.5$, $d=1$ and $d=d_{min}$, compared to ApproxMC3 with a tolerance $\epsilon = 3$. 
In all trials, the error probability $\delta$ is 0.1.
}
\vspace*{-.4cm}
\label{table:results}
\end{table*}

To compare our results with theirs, we have to consider that, if our tool returns $[l,u]$,
then the number of models lies in $[\lfloor 2^l\rfloor, \lceil 2^u \rceil]$ with error probability $\delta$ (set to 0.1. in all
experiments). By contrast, if \mbox{ApproxMC3}
returns a value $M$, then the number of models lies in
$\left[\frac M {1+\epsilon} , M(1+\epsilon)\right]$ with error
probability $\delta$ (again, set to 0.1 in all experiments).
So, we have to choose for ApproxMC3 a tolerance $\epsilon$ that produces an interval of possible solutions
comparable to what we obtain with our tool.
The ratio between the sup and the inf of our intervals is $2^{2\max(\alpha,\beta)+1}$
(indeed, $A$ always returned an interval such that $u-l \leq 2\max(\alpha,\beta)+1$);
when $\alpha=\beta=1.5$, the value is 16.
By contrast, the ratio between the sup and the inf of ApproxMC3's intervals is $(1+\epsilon)^2$;
this value is 16 for $\epsilon = 3$.

In all formulae that have ``sufficiently many" models --
empirically, at least $2^{11}$ -- our approach outperforms
ApproxMC3. Moreover, making use of the minimum distance information
implies a gain in performance. Of course, the larger the
distance, the greater the gain -- again, provided that the formula
has sufficiently many
models: compare, e.g., the first and the last formula.

\subsection{Towards a  practical methodology}
\label{sec:impr}
%
To be used in practice, our technique requires a lower bound on the
minimum Hamming distance between any two models of $\phi$.
We discuss below how error-correcting codes might be used, in principle, to obtain such a bound. Generally, speaking an error-correcting code
 adds redundancy to a string of bits and
inflates the minimum distance  between the resulting codewords.
The idea here is to transform the given formula $\phi$ 
into a new formula $\phi'$ that describes an encoding of the original formula's models: as a result
  $\#\phi'=\#\phi$,  but the models of $\phi'$ live in a higher dimensional space,
where a minimum distance between models is ensured.

Assume that $\phi(y)$ is already in Conjunctive Normal Form.
Fix a binary linear $[n,m,d]$ block code $\Code$
(i.e. a code with $2^m$ codewords of $n$ bits and minimum distance $d$),
and let $G$ be its generator matrix, i.e. a
$m \times n$ binary matrix such that the codeword associated to $u\in \mathbb{F}_2^m$ is $uG$
(where we use the vector-matrix multiplication in the field $\mathbb{F}_2$).
The fact that a $c\in \mathbb{F}_2^n$ is a codeword can be expressed by finding some $u$
that satisfies the conjunction of the $n$ XOR constraints:
$$
c_i =  \bigoplus_{j=1}^m u_j \cdot G_{ij}  \quad \mbox{ for } i = 1,\ldots,n\,.
$$
Again, the important condition here is that $G$   be \emph{sparse} (on its columns), so that the above formula effectively corresponds to a conjunction of sparse XOR constraints. That is, we should confine ourselves to   low-density parity check (LDPC) codes \cite{Gall62,LDPC}.
Now, we consider the formula
\begin{eqnarray*}
\phi'(z) & \defi & \exists y(\phi(y)\wedge z = yG)\,.
\end{eqnarray*}
\iffull
and prove the following result.

\begin{proposition}
$\#\phi=\#\phi'$.
\end{proposition}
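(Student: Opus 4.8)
The plan is to exhibit an explicit bijection between the models of $\phi$ and those of $\phi'$, so that the two counts coincide. First I would record the one structural fact that is needed: since $\Code$ is a binary linear $[n,m,d]$ block code with $2^m$ distinct codewords, its generator matrix $G$ has full row rank $m$ over $\mathbb{F}_2$. Consequently the encoding map $L:\mathbb{F}_2^m\to\mathbb{F}_2^n$, $L(u)=uG$, is injective (distinct messages are sent to distinct codewords), and its image is exactly the set of codewords of $\Code$.

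Next I would define a map $f$ from the set of models of $\phi(y)$ to $\mathbb{F}_2^n$ by $f(\sigma)=\sigma G$, where an assignment $\sigma$ to $y_1,\dots,y_m$ is viewed as a vector in $\mathbb{F}_2^m$. The map $f$ takes values in the models of $\phi'$: if $\sigma\models\phi(y)$, then the assignment $z:=\sigma G$ satisfies $\exists y(\phi(y)\wedge z=yG)$, as witnessed by $y:=\sigma$, hence $z\models\phi'(z)$. Injectivity of $f$ is immediate from injectivity of $L$. For surjectivity, suppose $\tau$ is a model of $\phi'(z)$; by the semantics of the existential quantifier there is an assignment $\sigma$ with $\sigma\models\phi(y)$ and $\tau=\sigma G=f(\sigma)$. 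Thus $\sigma$ is a model of $\phi$ mapped by $f$ to $\tau$, so $f$ is onto the models of $\phi'$. Therefore $f$ is a bijection between the models of $\phi$ and the models of $\phi'$, and $\#\phi=\#\phi'$.

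The argument is essentially bookkeeping about quantifier semantics; the only step that carries genuine content is the injectivity of $u\mapsto uG$, which is precisely where the defining property of a generator matrix (full row rank, equivalently $2^m$ pairwise distinct codewords) is used — without it $\#\phi'$ could be strictly smaller than $\#\phi$. Everything else follows directly, so I expect no real obstacle, only the need to state the rank hypothesis on $G$ explicitly.
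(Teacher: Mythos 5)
Your proof is correct and follows essentially the same route as the paper: both establish that $\sigma\mapsto\sigma G$ is a bijection from the models of $\phi$ onto the models of $\phi'$, with injectivity coming from the defining property of the generator matrix (you invoke full row rank directly, while the paper reads it off from the standard form $G=[I_m\ \ A]$ — the same fact in two guises). No gap.
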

\begin{proof}
Trivially, every model $\sigma$ of $\phi$ induces a model $\sigma G$ for $\phi'$.
For the converse, let $\sigma'$ be a model of $\phi'$; this means that there exists
a $\sigma$ such that $\sigma' = \sigma G$ and $\sigma$ satisfies $\phi$.
Moreover, as we now prove, there is just one such $\sigma$ (so, the function associating
$\sigma$ to $\sigma G$ is bijective). Indeed,
the mapping from $\mathbb{F}_2^m$ to $\mathbb{F}_2^n$ induced by multiplication by $G$ is injective.
To see this, consider $G$ in standard form, i.e. $G = [I_m\ \ A]$, where $I_m$ is the identity matrix of order $m$
(it is well-known that every generator matrix can be put in this form by generating an equivalent code --
see, e.g., \cite[Section 1.2]{Pless}). Thus, assume that $c \defi \sigma G = \sigma'G \defi c'$; then,
$$
c_i = \left\{
\begin{array}{ll}
\sigma_i & \mbox{ for } i = 1,\ldots,m
\vspace*{.3cm}\\
\bigoplus_{j=1}^m \sigma_j \cdot G_{ij} & \mbox{ for } i = m+1,\ldots,n
\end{array}
\right.
$$
For $c'_i$ similar values can be obtained, with $\sigma'$ in place of $\sigma$. Thus, trivially,
$\sigma = \sigma'$.
\qed
\end{proof}

\else
It can be easily proved that $\phi$ and $\phi'$ have the same number of models.
\fi
If we now assume a minimum distance of $d$ when applying  Theorem \ref{th:mainBound}, we have a   decrease in the feasibility threshold $\lambda^*$, as prescribed by \eqref{eq:lambdaStar}. This gain   must of course be balanced against the increased number of boolean variables in the formula (viz. $n$). We will have an actual advantage using $\Code$ over not using it (and simply assuming $d=1$) if and only if, by using $\Code$, the expected length of the resulting XOR constraints is actually smaller. By letting $\lambda^{*,{d}}\defi\lambda^*_1(s,n,\beta,d)$,
 the latter fact holds if and only if
\begin{eqnarray}\label{eq:code}
n\lambda^{*,{d}} & \leq & m\lambda^{*,1}
\end{eqnarray}
or equivalently $\lambda^{*,{d}} \leq R\lambda^{*,{1}}$,
where $R\defi \frac m n$ is the \emph{rate} of $\Code$.
This points to  codes with high rate and big minimum distance.
Despite these two parameters  pull one against the other,  \eqref{eq:code} can be fulfilled, and good expected length bounds obtained,  by choosing $\Code$ appropriately.

\begin{figure}[t]
\begin{center}
        \includegraphics[width=0.49\textwidth]{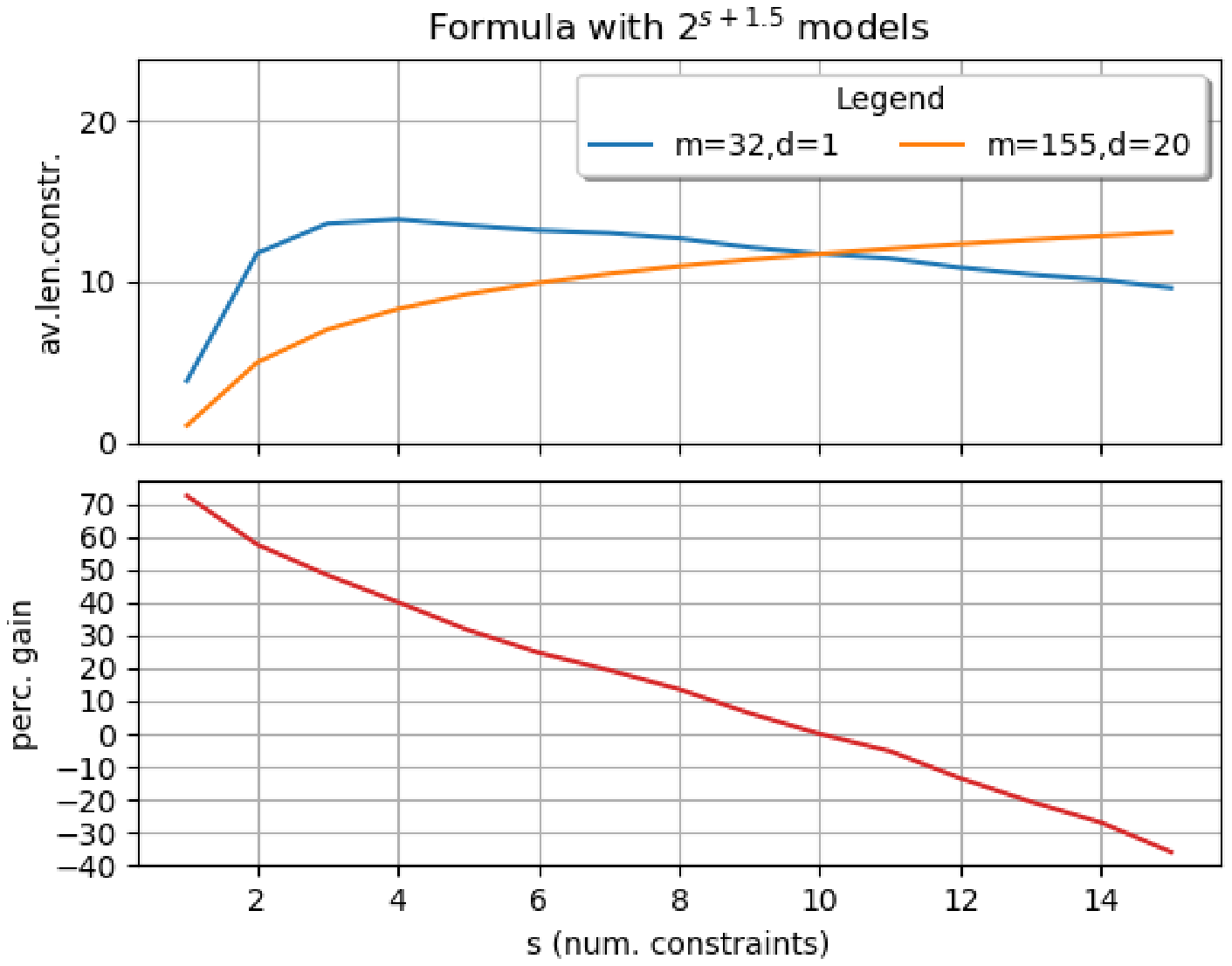}
        \includegraphics[width=0.49\textwidth]{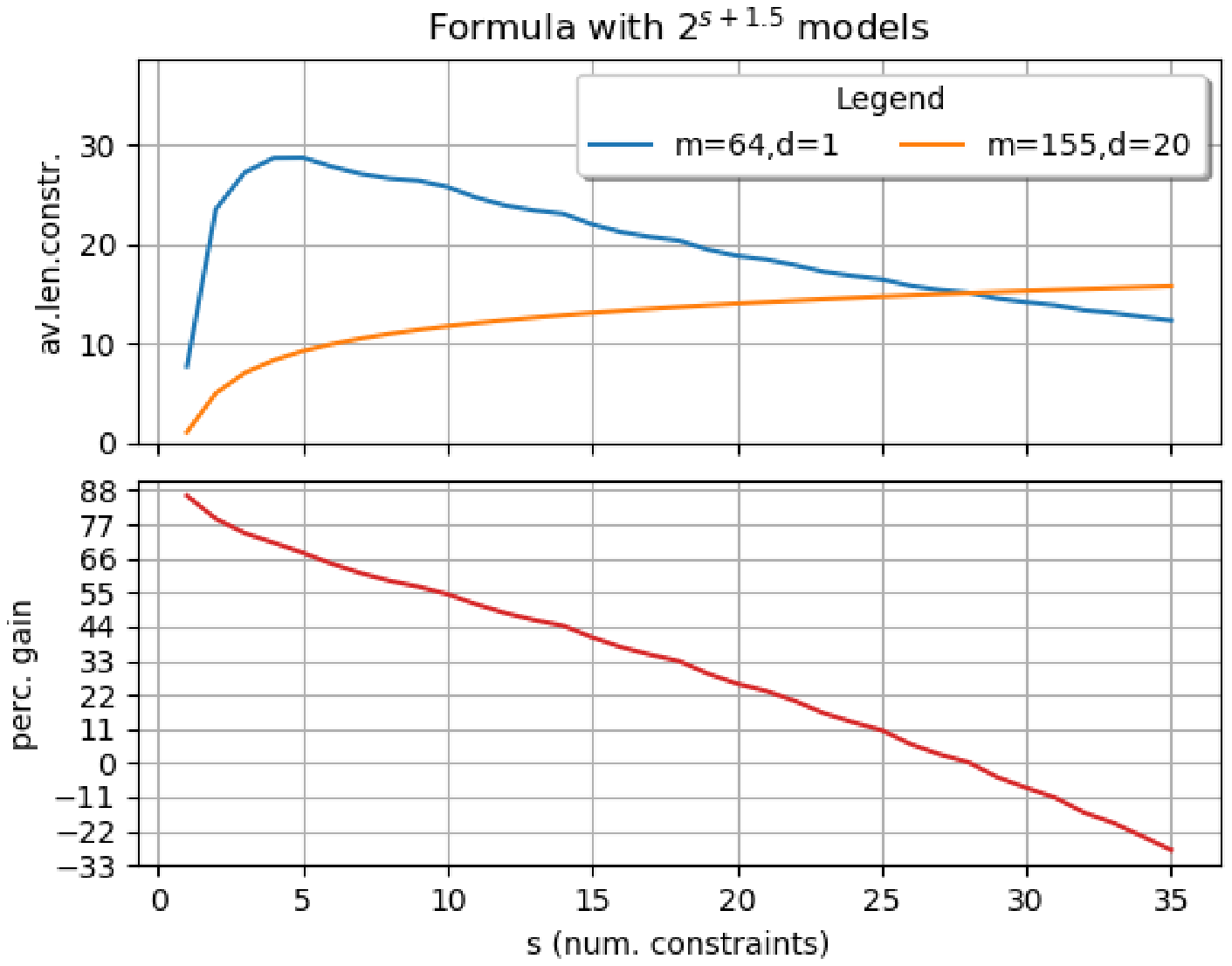}
                \vspace*{-.4cm}
\end{center}
        \caption{Plots of the expected length of XOR constraints as a function of $s$, with and without code, and the relative percentual gain. Here, $m=32$ (left) and $64$ (right), $\beta=1.5$ and the code is a $[155,64,20]$-LDPC.}\label{fig:code}
\end{figure}

For example, \cite{Tanner} presents a $[155,64,20]$-LDPC code, that is a code with block length of 155, with $2^{64}$ codewords and with minimum distance between codewords of 20.
In Fig. \ref{fig:code} we compare the expected length of the resulting XOR constraints in the two cases  -- $m\times \lambda^*(s,m,\beta,1)$ (without the code, for $m=32$ and $m=64$) and $155\times \lambda^*_1(s,155,\beta,20)$ (with the code) -- as functions of $s$, for fixed $\beta=1.5$.
As  seen from the plots, the use of the code offers a significant advantage in terms of expected length
 up to $s=10$ and $s=26$, respectively.
\ifmai
Let $p$ an odd prime and $1\leq j\leq k \leq p-1$  be  two integers. A construction   in \cite{LDPC} illustrates ho to build a parity check matrix $H=[H_d \,|\,H_p]$ of a systematic code with the following features: $H_d$ is $pj\times pj$, $H_p$ is $pk\times pk$, the Hamming weight of each row of $H$ is at most $k+3$ and the minimum distance of the resulting code is at least $j+1$.

Assume we deal with an outputs space of $m=128$ bits. We could choose a code with the following features: $p=17, k=8$ and $j=8$. Overall, after applying the code, we obtain codewords of length $pj+pk=272$ bits. The minimum distance is at least $d=9$. Assume $s= 30$ and let $\alpha=1.5$. We can compute $\lambda^*_1(32,128,1.5,1)=$
\fi

We have performed a few tests for a preliminary practical assessment of this idea. Unfortunately, in all but a few cases,
the use of the said $[155,64,20]$-LDPC code does not offer a significant advantage in terms of execution time.
Indeed, embedding a formula
in this code   implies adding 155 new XOR constraints: the   presence of
so many constraints,  however short,
apparently outweights the   benefit of a   minimum distance $d=20$.
We hope that alternative codes, with
  a more advantageous block length versus minimum distance tradeoff,
  would fare better. 
Indeed, as we showed in Table \ref{tab:compare},     relatively small distances
(e.g. $d=5$) can already give interesting gains, if the number of extra constraints is small.
 We leave that as subject for future research.

\section{Conclusion}\label{sec:five}
We have studied the relation between   sparse XOR constraints and
minimum Hamming distance in model counting. Our findings suggest
that minimum distance plays an important role in making the
feasibility threshold for $\lambda$ (density) lower, thus
potentially improving the effectiveness of XOR based model counting
procedures. These results also prompt a natural direction for future
research:  embedding the set of models into a higher dimensional
Hamming space, so as to
enforce a given minimum distance. 

Beside the already
mentioned \cite{EGSS14}, our work also relates to the recent
\cite{AT17}. There,   constraints    are represented as systems $Ay
= b$, for    $A$ a random LDPC    matrix   enjoying certain
properties,
  $b$ a random vector, 
and $y$   the variable  vector. 
Their results  are  quite different from ours, but also they take the
geometry of the set of models  into account, including minimum
distance. In particular, they   make their bounds depend also on a
 ``{boost}''  parameter which appears quite difficult to compute.  This
leads to a methodology that is only empirically validated -- that
is, the
model count results are offered with no guarantee.  

\paragraph{Acknowledgements }
We   thank 
Marco Baldi, Massimo Battaglioni and Franco Chiaraluce for
providing us with the generator and parity check matrices of the LDPC code in Subsection
\ref{sec:impr}.

\begin{small}
\bibliographystyle{splncs04}
\bibliography{modelCount}

\begin{thebibliography}{10}
\providecommand{\url}[1]{\texttt{#1}}
\providecommand{\urlprefix}{URL }
\providecommand{\doi}[1]{https://doi.org/#1}

\bibitem{mathexch}
\url{http://math.stackexchange.com/questions/79484/expressing-}
  \url{probability-that-successes-is-an-even-number-mathematically} ({})

\bibitem{AT17}
Achlioptas, D., Theodoropoulos, P.: {Probabilistic Model Counting with Short
  XORs}. In: Theory and Applications of Satisfiability Testing -- SAT 2017. pp.
  3--19. Springer (2017)

\bibitem{Babai}
Babai, L., Frankl, P.: Linear algebra methods in combinatorics. The University
  of Chicago (1992)

\bibitem{Legay18}
Biondi, F., Enescu, M.A., Heuser, A., Legay, A., Meel, K.S., Quilbeuf, J.:
  Scalable approximation of quantitative information flow in programs. In:
  Dillig, I., Palsberg, J. (eds.) Verification, Model Checking, and Abstract
  Interpretation. pp. 71--93. Springer International Publishing, Cham (2018)

\bibitem{BCH2}
Bose, R.C., Ray-Chaudhuri, D.K.: On a class of error correcting binary group
  codes. Information and Control  \textbf{3}(1),  68--79 (1960)

\bibitem{approxmc}
Chakraborty, S., Meel, K.S., Vardi, M.Y.: A scalable approximate model counter.
  In: Principles and Practice of Constraint Programming. pp. 200--216. Springer
  (2013)

\bibitem{CMV16}
Chakraborty, S., Meel, K.S., Vardi, M.Y.: {Algorithmic Improvements in
  Approximate Counting for Probabilistic Inference: From Linear to Logarithmic
  SAT Calls}. In: Proceedings of International Joint Conference on Artificial
  Intelligence (IJCAI) (7 2016)

\bibitem{Catuscia}
Chatzikokolakis, K., Palamidessi, C., Panangaden, P.: Anonymity protocols as
  noisy channels. Information and Computation  \textbf{206}(2--4),  378--401
  (2008)

\bibitem{EGSS14}
Ermon, S., Gomes, C.P., Sabharwal, A., Selman, B.: Low-density parity
  constraints for hashing-based discrete integration. In: Proceedings of the
  31th International Conference on Machine Learning, {ICML} 2014, Beijing,
  China, 21-26 June 2014. pp. 271--279 (2014)

\bibitem{Tanner}
Fuja, T.E., Sridhara, D., Tanner, R.M.: {A class of group-structured LDPC
  codes}. In: International Symposium on Communication Theory and Applications
  (2001)

\bibitem{Gall62}
Gallager, R.G.: Low Density Parity Check Codes. M.I.T. Press (1963)

\bibitem{GKNS07}
Gebser, M., Kaufmann, B., Neumann, A., Schaub, T.: clasp: A conflict-driven
  answer set solver. In: Logic Programming and Nonmonotonic Reasoning. pp.
  260--265. Springer (2007)

\bibitem{Gomes}
Gomes, C.P., Sabharwal, A., Selman, B.: Model counting: A new strategy for
  obtaining good bounds. In: Proc. of AAAI. pp. 54--61 (2006)

\bibitem{GSS09}
Gomes, C.P., Sabharwal, A., Selman, B.: Model counting. In: Handbook of
  Satisfiability, pp. 633--654. IOS Press (2009)

\bibitem{BCH1}
Hocquenghem, A.: Codes correcteurs d'erreurs. Chiffres  \textbf{2},  147--156
  (1959)

\bibitem{KMM13}
Klebanov, V., Manthey, N., Muise, C.: {SAT-Based Analysis and Quantification of
  Information Flow in Programs}. In: Quantitative Evaluation of Systems. pp.
  177--192. Springer (2013)

\bibitem{KWW16}
Klebanov, V., Weigl, A., Weibarth, J.: {Sound Probabilistic \#SAT with
  Projection}. In: Proc. of QAPL (2016)

\bibitem{Lugosi}
Lugosi, G.: Concentration-of-measure inequalities.
  \url{http://84.89.132.1/~lugosi/anu.pdf} (2009), lecture Notes

\bibitem{LDPC}
MacKay, D.J.C.: Good error-correcting codes based on very sparse matrices. IEEE
  Trans. Inf. Theory  \textbf{45}(3),  399--432 (1999)

\bibitem{MMBH12}
Muise, C., McIlraith, S.A., Beck, J.C., Hsu, E.I.: {Dsharp: Fast d-DNNF
  Compilation with sharpSAT}. In: Advances in Artificial Intelligence. pp.
  356--361. Springer (2012)

\bibitem{Pless}
Pless, V.: Introduction to the Theory of Error-Correcting Codes. Wiley and
  Sons, 3rd edn. (1998)

\bibitem{Smith}
Smith, G.: On the foundations of quantitative information flow. In: {Proc. of
  FoSSaCS 2009}. {LNCS}, vol.~5504, pp. 288--302. Springer (2009)

\bibitem{SM19}
Soos, M., Meel, K.S.: {BIRD: Engineering an Efficient CNF-XOR SAT Solver and
  its Applications to Approximate Model Counting}. In: Proceedings of AAAI
  Conference on Artificial Intelligence (AAAI) (1 2019)

\bibitem{CMS}
Soos, M., Nohl, K., Castelluccia, C.: Extending {SAT} solvers to cryptographic
  problems. In: Proceedings of the 12th International Conference on Theory and
  Applications of Satisfiability Testing (SAT). LNCS, vol.~5584, pp. 244--257.
  Available at \url{http://www.msoos.org/cryptominisat2/} (2009)

\bibitem{Thu06}
Thurley, M.: {sharpSAT -- Counting Models with Advanced Component Caching and
  Implicit BCP}. In: Theory and Applications of Satisfiability Testing - SAT
  2006. pp. 424--429. Springer (2006)

\bibitem{Val79}
Valiant, L.G.: The complexity of enumeration and reliability problems. {SIAM}
  J. Comput.  \textbf{8}(3),  410--421 (1979)

\bibitem{VV86}
Valiant, L.G., Vazirani, V.V.: {NP} is as easy as detecting unique solutions.
  Theor. Comput. Sci.  \textbf{47}(3),  85--93 (1986)

\end{thebibliography}
\end{small}


\end{document}